\theoremstyle{plain}
\newtheorem{Theorem}{Theorem}		
\newtheorem{Proposition}{Proposition}
\newtheorem{Assumption}{Assumption}
\newtheorem*{Proof}{Proof}
\providecommand{\keywords}[1]{\textbf{\textit{Keywords: }} #1}
\begin{document}

\title{An Extended Empirical Saddlepoint Approximation for Intractable Likelihoods}

\author[1,$\dag$]{Matteo Fasiolo}
\author[1]{Simon N. Wood}
\author[2]{Florian Hartig}
\author[3]{Mark V. Bravington} 
\affil[1]{School of Mathematics, University of Bristol}
\affil[2]{Department of Biometry and Environmental System Analysis, University of Freiburg}
\affil[3]{CSIRO Center for Mathematics and Information Science}
\affil[$\dag$]{Correspondence: matteo.fasiolo@bristol.ac.uk}

\maketitle

\abstract{
The challenges posed by complex stochastic models used in computational ecology, biology and genetics have stimulated the development of approximate approaches to statistical inference. Here we focus on Synthetic Likelihood (SL), a procedure that reduces the observed and simulated data to a set of summary statistics, and quantif{\text{}}ies the discrepancy between them through a synthetic likelihood function. SL requires little tuning, but it relies on the approximate normality of the summary statistics. We relax this assumption by proposing a novel, more f{\text{}}lexible, density estimator: the Extended Empirical Saddlepoint approximation. In addition to proving the consistency of SL, under either the new or the Gaussian density estimator, we illustrate the method using three examples. One of these is a complex individual-based forest model for which SL offers one of the few practical possibilities for statistical inference. The examples show that the new density estimator is able to capture large departures from normality, while being scalable to high dimensions, and this in turn leads to more accurate parameter estimates, relative to the Gaussian alternative. The new density estimator is implemented by the \verb|esaddle| R package, which can be found on the Comprehensive R Archive Network (CRAN).
}

\vspace{4pt}

\keywords{Intractable likelihood; Saddlepoint approximation; Synthetic Likelihood; Simulation-based inference; Implicit statistical model; Density estimation.}

\section{Introduction}

Synthetic Likelihood (SL) \citep{wood2010} is a simulation-based inferential procedure similar to Approximate Bayesian Computation (ABC) methods \citep{beaumont2002approximate}, but with the practical advantage of requiring much less tuning. In \cite{wood2010} tuning is avoided partly through a Gaussian assumption on the distribution of the statistics used to compare the data and the model output. This assumption can be problematic, as illustrated by the following very simple population dynamic model of an organism subject to boom and bust dynamics with stochastic external recruitment:
\begin{equation} \label{eq:crashModel}
N_{t+1} \sim
\begin{cases}
\text{Pois}\{N_t(1+r)\} + \epsilon_t, & \text{if } N_t \leq \kappa,
\\
\text{Binom}(N_t, \alpha) + \epsilon_t, & \text{if } N_t > \kappa,
\end{cases}
\end{equation}
where $\epsilon_t \sim \text{Pois}(\beta)$ is a stochastic arrival process, with rate $\beta > 0$, and $t=1, \dots, \text{T}$. The population $N_t$ grows stochastically at rate $r>0$, but it crashes if the carrying capacity $\kappa$ is exceeded. The severity of the crash depends on the survival probability $\alpha\in(0,1)$. Two fairly natural statistics for this model are the population mean and the number of periods during which $N_t \leq 1$, the latter being useful for identifying $\beta$. However, as Figure \ref{fig: crash_dist} shows, the distribution of these statistics is far from normal, which could affect the accuracy of the parameter estimates produced by SL. The purpose of this paper is to develop a version of SL that relaxes the normality requirement, while retaining the tuning free advantages of the original method. We do this by replacing the Gaussian assumption with a new density estimator: the Extended Empirical Saddlepoint (EES) estimator. We prove the consistency of the resulting parameter estimator, and illustrate that the method can yield substantial inferential improvements when multivariate Gaussianity is untenable. We also provide examples where ABC methods would require exceedingly low tolerances and low acceptance rates in order to achieve equivalent accuracy.

The most important commonality between SL and ABC methods is that both base statistical inference on a vector of summary statistics, ${ \bm s^0} = S({\bm y}^0)$, rather than on the full data, ${\bm y}^0$. However, while ABC methods explicitly aim at sampling from the approximate posterior $p(\bm \theta| \bm s^0)$, SL provides a parametric approximation to $p(\bm s^0|\bm \theta)$. This synthetic likelihood, which we indicate with $p_{SL}(\bm s^0|\bm \theta)$, can then be used within a Bayesian or a classical context. \cite{wood2010} used a multivariate Gaussian density to approximate the distribution of the summary statistics. Under this distributional assumption, a pointwise estimate of the synthetic likelihood at $\bm \theta$ can be obtained using Algorithm \ref{alPoint}.
\begin{algorithm}
\caption{Estimating $p_{SL}({ \bm s}^0|{ \bm \theta})$}
\label{alPoint}
\begin{algorithmic}[1]
\STATE  Simulate datasets ${ \bm Y}_i, \dots, {\bm Y}_m$ from the model $p({ \bm y}| \bm \theta)$.
\STATE  Transform each dataset ${ \bm Y}_i$ to a vector of summary statistics $ \bm S_i = S({ \bm Y}_i)$.
\STATE  Calculate sample mean $\hat{ \bm \mu}_{ \bm \theta}$ and covariance $\hat{ \bm \Sigma}_{ \bm \theta}$ of the simulated statistics, possibly robustly. 
\STATE  Estimate the synthetic likelihood
\begin{equation*}
\hat{p}_{SL}({ \bm s}^0|{ \bm \theta}) = (2\pi)^{-\frac{d}{2}}\text{det}\big(\hat{ \bm \Sigma}_{ \bm \theta}\big)^{-\frac{1}{2}} \exp \bigg \{-\frac{1}{2}({ \bm s}^0 - \hat { \bm \mu}_{ \bm \theta})^T \hat { \bm \Sigma}^{-1}_{ \bm \theta} ({ {\bm s}}^0 - \hat { \bm \mu}_{ \bm \theta}) \bigg \}, 
\end{equation*}
 where $d$ is the number of summary statistics used. 
\end{algorithmic}
\end{algorithm}

One advantage of SL, over most ABC methods, is that it does not require the user to choose a tolerance or an acceptance threshold and that the summary statistics are scaled automatically and dynamically by $\hat {\bm \Sigma}_{\bm \theta}$. In addition, \cite{blum2010approximate} showed that the convergence rate of ABC methods degrades rapidly with $d$. This curse of dimensionality, brought about by the non-parametric nature of ABC, forces practitioners to use dimension reduction or statistics selection techniques, such as those described by \cite{blum2013comparative}. SL is less sensitive to the number of statistics used, due to the parametric likelihood approximation. 

\begin{figure}
\centering
\includegraphics[scale = 0.40]{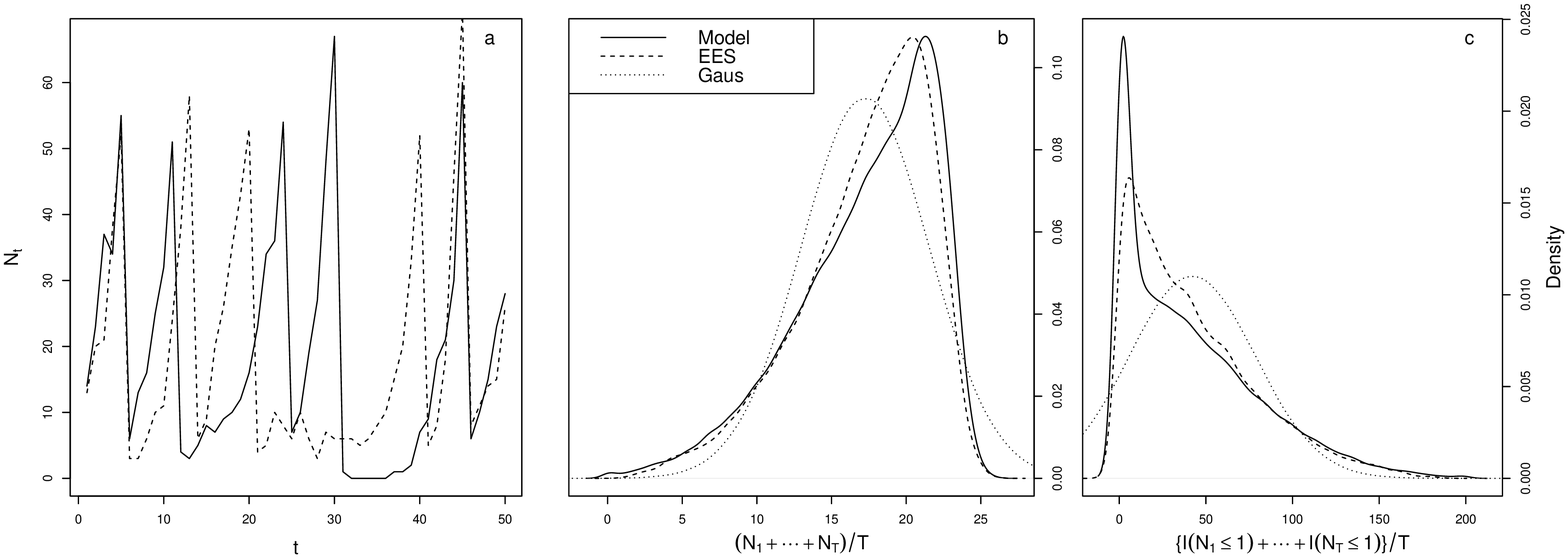}
\caption{a: Two sample paths, simulated from model (\ref{eq:crashModel}). b and c: Marginal distributions of the population mean and of the number of periods $t$ during which $N_t \leq 1$, when $T=250$. Both distributions are highly skewed, and the EES density achieves a much better fit than a Gaussian approximation.}
\label{fig: crash_dist}
\end{figure}

The development of approximate inferential approaches, such as SL and ABC, has been driven by the increasing availability of computational resources and by the  challenges to model-based inference emerging in computational biology, ecology and genetics. These methods address the issue that, for many scientif{\text{}}ically motivated models, the likelihood function is intractable; it may be too expensive to evaluate, unknown, or too time-consuming to derive analytically. Furthermore, even when sophisticated integration approaches, such as particle f{\text{}}ilters \citep{doucet2000sequential}, could provide consistent likelihood estimators, using approximate methods might still be preferable in practice, because of speed, automation and robustness to implementation details. Particle f{\text{}}ilters often rely on the specif{\text{}}ic structure of the chosen model, hence their implementation may need substantial changes if a different model is considered. In contrast, SL and ABC treat the model as a black box, thus allowing practitioners to rapidly explore a variety of models. 

The performance of SL, ABC and particle filters has been compared in detail by \cite{fasiolo2016comparison} and \cite{everitt2015bayesian}, respectively in the context of parameter estimation for non-linear state space models and of Bayesian model selection for Markov random field models. It is not the purpose of this paper to provide another extensive comparison. Instead, we focus on SL, and we start from the observation that the above-mentioned properties of this method are not without cost. In fact, although the Central Limit Theorem assures asymptotic normality of many classes of statistics, improving the quality of the normal approximation is not easy in a multivariate setting. Finding a suitable normalizing transformation is particularly challenging in the context of parameter estimation, because such transformation would need to ensure normality across the parameter space. This motivates the main contribution of this work: we relax the multivariate normality assumption, while maintaining the ease-of-use and scalability of SL. We achieve this by proposing a f{\text{}}lexible density estimator, namely the EES approximation. In addition to illustrating empirically that, when the distribution of the summary statistics is far from normal, the EES-based version of SL leads to more accurate parameter estimates than its Gaussian alternative, we prove that maximizing the synthetic likelihood produces consistent parameter estimators, under either the EES or the Gaussian density estimator. 

The paper is organized as follows. We introduce the empirical saddlepoint approximation in Section \ref{sec:oldEmpSad} and we propose its extended version in Section \ref{sec:exSad}. In Section \ref{sec:withinSL} we clarify how the new density estimator can be used within the context of SL and we prove the consistency of the resulting parameter estimator. In Sections \ref{sec:crashModel} and \ref{sec:shifted} we illustrate the method on model (\ref{eq:crashModel}) and on another simple example, designed to show the potential limitations of ABC and of the Gaussian version of SL, and in Section \ref{sec:formind_model} we apply the method to inference for a complex individual-based forest model, for which statistical inference is challenging without the use of summary statistics, while the model is sufficiently computationally costly that extensive method tuning is impractical. 

\section{Saddlepoint approximations} \label{sec:oldEmpSad}

Recall that we are interested in using saddlepoint methods to closely approximate the statistics-based likelihood, $p(\bm s^0| \bm \theta)$. However, the following discussion is valid beyond the context of SL, hence we temporarily suppress all dependencies on $\bm \theta$. We restore them in Section \ref{sec:withinSL}, which describes how the proposed density estimator can be used within SL.

We were led to saddlepoint approximations, among other multivariate density estimators, by the following considerations. While saddlepoint approximations are derived from asymptotic expansions, they are often very accurate even in small samples and, in contrast to Edgeworth approximations, they are strictly positive and do not show polynomial-like waves in the tails. In addition, their empirical version provides a close approximation to the density of widely used statistics, such as $M$- \citep{ronchetti1994empirical} and $L$-estimators \citep{easton1986general}.
 
Saddlepoint expansions were introduced into the statistical literature by \cite{daniels1954saddlepoint} and can be used to approximate the density function of a random variable, starting from its moment or cumulant generating function. When $\bm S$ is a continuous $d$-dimensional random vector, its probability density function, $p(\bm s)$, is associated with the moment generating function
\begin{equation*}
M( \bm \lambda) = \mathbb{E}\big (e^{ \bm \lambda^T  \bm S}\big ) = \int_{- \infty}^{+ \infty} e^{ \bm \lambda^T   \bm s}\, p( \bm s) \, d  \bm s,
\end{equation*}
while the cumulant generating function is def{\text{}}ined as $K( \bm \lambda) = \log{M( \bm \lambda)} $. We indicate its gradient and Hessian with $K'(\bm \lambda)$ and $K''(\bm \lambda)$. In the following we assume that $M( \bm \lambda)$ exists for $ \bm \lambda \in I$, where $I$ is a non-vanishing subset of $\mathbb{R}^d$ containing the origin. If $ \bm S$ is a discrete random vector, the generating functions are obtained by substituting the integrals with summations over the support of $ \bm S$.

Saddlepoint approximations rely on the one-to-one correspondence between the cumulant generating function and the probability density function of $ \bm S$. For a continuous $\bm S$, the saddlepoint density is
\begin{equation*}
\hat{p}( \bm s) = \frac{1}{ (2 \pi)^{\frac{d}{2}} \, \text{det}\{K''(\hat{ \bm \lambda})\}^{\frac{1}{2}} } 
e^{K(\hat{ \bm \lambda}) - \hat{ \bm \lambda}^T  \bm s} \; , 
\end{equation*}
where $\hat{ \bm \lambda}$ is such that
\begin{equation} \label{eq:sadeq}
K'(\hat{ \bm \lambda}) =  \bm s.
\end{equation}
Condition (\ref{eq:sadeq}) is often called the saddlepoint equation. The saddlepoint density is def{\text{}}ined only on the interior, $J_{V_{\bm s}}$, of the support, $V_{\bm s}$, of the original density, $p( \bm s)$.
Another important property of $\hat{p}( \bm s)$ is that it is generally improper. A proper density can be obtained through normalization
$$
\bar{p}( \bm s) = \frac{\hat{p}( \bm s)}{\int_{J_{V_{\bm s}}}\hat{p} ( \bm s) d    \bm s}.
$$
For a discrete $ \bm S$ analogous results hold and $\bar{p}( \bm s)$ should be interpreted as an approximation to $\text{pr}( \bm S =  \bm s)$.  For an introduction to saddlepoint approximations, see \cite{butler2007saddlepoint}.

\subsection{Empirical Saddlepoint approximation} \label{sec:empSad}

Suppose that the analytic form of $K( \bm \lambda)$ is unknown, as it generally is for simulation-based methods such as SL. If we can simulate from $p( \bm s)$, then it is possible to estimate $K( \bm \lambda)$ using the estimator proposed by \cite{davison1988saddlepoint}
\begin{equation} \label{eq:emp_cgf}
\hat{K}_m( \bm \lambda) = \log{\hat{M}_m( \bm \lambda)} = \log{ \bigg ( \frac{1}{m}\sum_{i=1}^{m}e^{ \bm \lambda^T   \bm s_i} \bigg )},
\end{equation}
where $m$ is the number of simulations used. Derivative estimates of $\hat{K}_m( \bm \lambda)$ are
$$
\hat{K}_m'( \bm \lambda) = \frac{\sum_{i=1}^{m}e^{ \bm \lambda^T   \bm s_i} \bm s_i}{\sum_{i=1}^{m}e^{ \bm \lambda^T  \bm s_i}},\;\;\;\;\;\;\;\; \hat{K}_m''( \bm \lambda) = \frac{\sum_{i=1}^{m}e^{ \bm \lambda^T  \bm s_i} \bm s_i  \bm s_i^T}{\sum_{i=1}^{m}e^{ \bm \lambda^T  \bm s_i}} - \hat{K}_m'( \bm \lambda) \hat{K}_m'( \bm \lambda)^T.
$$
These can be used to obtain an empirical saddlepoint approximation
\begin{equation} \label{eq:empSadEstim}
\hat{p}_m( \bm s) = \frac{1}{ (2 \pi)^{\frac{d}{2}} \, \text{det}\{\hat{K}_m''(\hat{ \bm \lambda}_m)\}^{\frac{1}{2}} } 
e^{\hat{K}_m(\hat{ \bm \lambda}_m) - \hat{ \bm \lambda}_m^T  \bm s},
\end{equation} 
where $\hat{ \bm \lambda}_m$ is the solution of
\begin{equation} \label{eq:empSadEq}
\hat{K}'_m(\hat{ \bm \lambda}_m) = \bm s.
\end{equation}
Notice that $\hat{K}_m'(\bm \lambda)$ is a convex combination of the simulated vectors $ \bm s_i$, hence (\ref{eq:empSadEq}) has no solution if $ \bm s$ falls outside the convex hull of the $ \bm s_i$s. This limitation is addressed in Section \ref{sec:exSad}.

\cite{feuerverger1989empirical} provides asymptotic results regarding how well $\hat{p}_m( \bm s)$ approximates $\hat{p}( \bm s)$ in a univariate setting. In the Supplementary Material we show how these carry over to the current multivariate setting. In particular, $\hat{p}_m( \bm s)$ converges to $\hat{p}( \bm s)$ at parametric rate $O(m^{-1/2})$ for $ \bm \lambda \in I/2$, where $I/2$ is the subset of $I$ such that $ \bm \lambda \in I/2$ if $2  \bm \lambda \in I$, while the convergence is slower outside this region. Regardless of the distribution of $ \bm S$, $ \bm s =  \bm \mu = \mathbb{E}( \bm S)$ corresponds to $ \bm \lambda =  \bm{0} \in I/2$, hence it might be advantageous to think of $\hat{K}_m'(I/2)$ as a region approximately centred around $ \bm \mu$. In Section \ref{sec:exSad} we build upon this interpretation.

\section{Extended Empirical Saddlepoint approximation} \label{sec:exSad}

The aim of this work is to use the f{\text{}}lexibility of the empirical saddlepoint approximation to estimate densities for which the normal approximation is poor. The asymptotic results of \cite{feuerverger1989empirical} suggest that the saddlepoint approximation should perform reasonably well in the central part of the distribution, while its accuracy decreases in the tails. More importantly, as stated in Section \ref{sec:empSad}, the empirical saddlepoint equation (\ref{eq:empSadEq}) has a solution only if $ \bm s$ lies inside the convex hull of the simulated data, so the resulting empirical saddlepoint density is not def{\text{}}ined outside this subset of $\mathbb{R}^d$.
This is problematic in the context of SL because, whether we wish to estimate the unknown parameters by Maximum Likelihood or Markov chain Monte Carlo (MCMC), we cannot generally expect ${\bm s}^0$ to fall inside the convex hull of the simulated statistics in early iterations. In addition, if the model of interest is unable to generate summary statistics that are close to the observed ones, its inadequacy should ideally be quantif{\text{}}ied by a low, rather than an undef{\text{}}ined, value of the synthetic likelihood. Hence, we need a remedy that allows us to solve (\ref{eq:empSadEq}) for any $\bm s = {\bm s}^0$.

To motivate our solution, notice that solving (\ref{eq:empSadEq}) is equivalent to minimizing
\begin{equation*}
\{ \hat{K}_m( \bm \lambda) -  \bm \lambda^T  \bm s \}^2,
\end{equation*}
which would be guaranteed to have a unique minimum, if strong convexity held. That is, if
\begin{equation} \label{eq:convex_cond}
\exists \; \epsilon \in \mathbb{R}^+ \;\; \text{such that} \;\;  \bm z^T \hat{K}_m''( \bm \lambda)  \bm z > \epsilon || \bm z||^2, \;\;\;\; \forall \;\;  \bm \lambda,  \bm z \in \mathbb{R}^d \;\; \text{such that} \;\; || \bm z|| > 0, 
\end{equation}
then (\ref{eq:empSadEq}) could be solved for any $ \bm s$. Unfortunately, the following proposition states that this in not the case.

\begin{Proposition} 
$\hat{K}_m( \bm \lambda)$ is strictly, but not strongly, convex. 
\end{Proposition}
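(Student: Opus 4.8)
The plan is to exploit the exponential-tilting structure behind $\hat K_m$. For a fixed $\bm\lambda$, introduce the tilted weights $p_i(\bm\lambda)=e^{\bm\lambda^T\bm s_i}\big/\sum_{j=1}^m e^{\bm\lambda^T\bm s_j}$, which are strictly positive and sum to one, and let $\bm S_{\bm\lambda}$ be the random vector placing mass $p_i(\bm\lambda)$ on $\bm s_i$. Reading off the formulas for $\hat K_m'$ and $\hat K_m''$ given in Section \ref{sec:empSad}, one has $\hat K_m'(\bm\lambda)=\mathbb{E}(\bm S_{\bm\lambda})$ and $\hat K_m''(\bm\lambda)=\mathrm{Cov}(\bm S_{\bm\lambda})$, so $\hat K_m''(\bm\lambda)$ is positive semi-definite for every $\bm\lambda$ and $\hat K_m$ is convex. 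Everything then reduces to two quantitative statements about this covariance matrix: that it is everywhere positive definite (strict convexity), but that its smallest eigenvalue has infimum zero over $\bm\lambda$ (failure of strong convexity).

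For strict convexity, fix $\bm\lambda$ and $\bm z\neq\bm 0$ and write $\bm z^T\hat K_m''(\bm\lambda)\bm z=\mathrm{Var}(\bm z^T\bm S_{\bm\lambda})=\sum_{i=1}^m p_i(\bm\lambda)\big(\bm z^T\bm s_i-\bm z^T\mathbb{E}(\bm S_{\bm\lambda})\big)^2$. Since every $p_i(\bm\lambda)>0$, this vanishes only if $\bm z^T\bm s_i$ is the same for all $i$, i.e. only if the simulated statistics $\bm s_1,\dots,\bm s_m$ all lie on a common hyperplane $\{\bm x:\bm z^T\bm x=c\}$. Under the standing (mild) assumption that the $\bm s_i$ are not contained in any proper affine subspace of $\mathbb{R}^d$ — equivalently, no nonzero linear combination $\bm z^T\bm s_i$ is constant in $i$, which holds with probability one for continuous statistics once $m>d$ — this forces $\bm z=\bm 0$, a contradiction. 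Hence $\hat K_m''(\bm\lambda)$ is positive definite for every $\bm\lambda$, so $\hat K_m$ is strictly convex.

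To show that strong convexity fails, I would produce a ray in $\bm\lambda$-space along which $\hat K_m''$ degenerates. Choose a direction $\bm u\in\mathbb{R}^d$ for which the numbers $\bm u^T\bm s_1,\dots,\bm u^T\bm s_m$ are pairwise distinct (any $\bm u$ outside the finite union of hyperplanes $\{\bm u:\bm u^T(\bm s_i-\bm s_j)=0\}$), and let $i^\star$ be the index attaining $\max_i \bm u^T\bm s_i$. Along $\bm\lambda=t\bm u$ with $t\to+\infty$, dividing numerator and denominator of $p_i(t\bm u)$ by $e^{t\bm u^T\bm s_{i^\star}}$ shows $p_{i^\star}(t\bm u)\to1$ and $p_i(t\bm u)\to0$ for $i\neq i^\star$, whence $\hat K_m''(t\bm u)=\mathrm{Cov}(\bm S_{t\bm u})\to\bm s_{i^\star}\bm s_{i^\star}^T-\bm s_{i^\star}\bm s_{i^\star}^T=\bm 0$. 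Consequently, for any fixed unit vector $\bm z$, $\bm z^T\hat K_m''(t\bm u)\bm z\to0$, so no $\epsilon\in\mathbb{R}^+$ can satisfy (\ref{eq:convex_cond}); strong convexity therefore does not hold.

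The genuinely routine pieces here are the exponential-family moment identities and the exponential decay of the off-maximum weights, which I would state without lengthy computation. The only point needing care is the strict-convexity step: one must make the non-degeneracy of the simulated point cloud explicit, since without it $\hat K_m$ is only strictly convex relative to the affine hull of the $\bm s_i$ rather than on all of $\mathbb{R}^d$. I expect this to be the main (and essentially the sole) subtlety; the failure of strong convexity is then a clean limiting argument.
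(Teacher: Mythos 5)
Your proposal is correct and follows essentially the same route as the paper's proof: the tilted-weight (covariance) representation of $\hat K_m''(\bm \lambda)$ gives positive semi-definiteness, a spanning/non-degeneracy condition on the simulated points (the paper's rank assumption, your affine-hull assumption) upgrades this to positive definiteness and hence strict convexity, and strong convexity is refuted by letting $\|\bm \lambda\|\to\infty$ along a ray so that the weights collapse onto the maximizing face and the quadratic form degenerates. The only difference is cosmetic: you pick a generic direction with a unique maximizer so the whole Hessian tends to $\bm 0$, whereas the paper scales an arbitrary $\bm \lambda$, handles possible ties via the index set $J$, and tests degeneracy with $\bm z = \bm \lambda$; both arguments are valid, and your explicit statement of the non-degeneracy of the point cloud matches the assumption the paper introduces inside its proof.
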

\begin{proof}
See Appendix \ref{app:convex_proof}. 
\end{proof}
However, the fact that $\hat{K}( \bm \lambda)$ is strictly convex assures that tilting this estimator with a strongly convex function will produce a modif{\text{}}ied estimator that is strongly convex itself, so that (\ref{eq:empSadEq}) could be solved for any $ \bm s$. Therefore, we propose to use a modif{\text{}}ied estimator
\begin{equation} \label{eq:modCGF}
\hat{K}_m( \bm \lambda, \gamma, \bm s) = g( \bm s, \gamma) \hat{K}_m( \bm \lambda) + \{ 1 - g( \bm s, \gamma) \} \hat{G}_m( \bm \lambda),
\end{equation}
where $\hat{G}_m( \bm \lambda)$ is a strongly convex function, while $g( \bm s, \gamma)$ is a function of $ \bm s$, parametrized by $\gamma$, which determines the mix between the two functions. Furthermore, we require
\begin{equation} \label{eq:gamma_req}
g( \bm s, \gamma) : \mathbb{R}^d \to [0, 1], \;\;\;\;\;\;\;
\lim_{|| \bm s -  \hat{\bm \mu}|| \to \infty} g( \bm s, \gamma) = 0.
\end{equation} 
A natural choice for $\hat{G}_m( \bm \lambda)$ is the following parametric estimator of $K( \bm \lambda)$
\begin{equation} \label{eq:normalCGF}
\hat{G}_m( \bm \lambda) = { \bm \lambda^T}\hat{ \bm \mu} + 
\frac{1}{2} { \bm \lambda^T}\hat{ \bm \Sigma}{ \bm \lambda},
\end{equation}
which is unbiased and consistent for multivariate normal random variables. 
%
This leads to 
$$
\hat{K}_m( \bm \lambda, \gamma, \bm s) = { \bm \lambda^T}\hat{ \bm \mu} + 
\frac{\eta}{2} { \bm \lambda^T}\hat{ \bm \Sigma}{ \bm \lambda} + g( \bm s, \gamma) \bigg \{ \frac{1}{3!} \sum_{i = 1}^d\sum_{j = 1}^d\sum_{k = 1}^d  \frac{\partial^3 \hat{K}_m}{\partial \lambda_i\partial\lambda_j\partial\lambda_k}\bigg|_{\bm \lambda = \bm 0}\lambda_i\lambda_j\lambda_k + \cdots \bigg \},$$
where $\eta = 1 - g( \bm s, \gamma)/(m-1)$ appears because here $\hat{ \bm \Sigma}$ is the standard unbiased covariance estimator, while $\hat{K}''_m(\bm \lambda = \bm 0) = m \hat{\bm \Sigma} / (m-1)$. Similarly, evaluating higher derivatives of $\hat{K}_m(\bm \lambda)$ at $\bm \lambda = \bm 0$ produces consistent, but biased, estimators of the corresponding cumulants. Unbiased cumulant estimators are the $k$-statistics \citep{mccullagh1987tensor}. 
%

Our solution is related to that of \cite{wang1992general}, who modif{\text{}}ied the truncated estimator of \cite{easton1986general}, and to the proposal of \cite{bartolucci2007penalized}, in the context of Empirical Likelihood \citep{owen2001empirical}.
%
%
We refer to the density obtained by using estimator (\ref{eq:modCGF}) within (\ref{eq:empSadEstim}) as the Extended Empirical Saddlepoint approximation (EES). In Section \ref{sec: trans_fun} we propose a particular form for $g( \bm s, \gamma)$.

\subsection{Choosing and tuning the mixture function $g( \bm s, \gamma)$} \label{sec: trans_fun}

In Appendix \ref{app:MSEcalc} we derive the MSEs of estimators (\ref{eq:emp_cgf}) and (\ref{eq:normalCGF}), under normality of $ \bm S$. We then base our choice of $g( \bm s, \gamma)$ on the relative MSE performance of the two estimators. In particular, we choose 
\begin{equation} \label{eq:mix_fun_choice}
\begin{array} {lcl} 
g( \bm s, \gamma) & = & \Bigg [ \cfrac{{( \bm s - \hat{ \bm \mu})^T}  \hat{\bm \Sigma}^{-1} ( \bm s - \hat{ \bm \mu}) \big \{ 1 + \frac{1}{2} ( \bm s - \hat{ \bm \mu})^T  \hat{\bm \Sigma}^{-1} ( \bm s - \hat{ \bm \mu}) \big \}+ 1}{\text{exp} \big \{ ( \bm s - \hat{ \bm \mu}) ^ T  \hat{\bm \Sigma}^{-1} ( \bm s - \hat{ \bm \mu}) \big \}} \Bigg ]^{\gamma} \\
& \approx & \bigg[\cfrac{\text{MSE} \{ \hat{G}_m(\bm \lambda) \}+1}{\text{MSE} \{ \hat{K}_m(\bm \lambda) \}+1} \bigg]^{\gamma}, 
\end{array}
\end{equation}
where $\gamma > 0$ is a tuning parameter, which determines the rate at which $g( \bm s, \gamma)$ varies from $1$ to $0$, as the distance between $ \bm s$ and $\hat{ \bm \mu}$ increases. Apart from fulfi{\text{}}lling requirement (\ref{eq:gamma_req}), function (\ref{eq:mix_fun_choice}) has the desirable property of being invariant under linear transformations. More precisely, if $\bm z = \bm a + \bm B \bm s$ and $\bm Z_i = \bm a + \bm B \bm S_i$, for $i = 1, \dots, m$, then $g^{\bold z}(\bm z, \gamma) = g(\bm s, \gamma)$. Using this fact, it is simple to show that EES is equivariant under such transformations, that is $\log \hat{p}^{\bm z}_m(\bm z, \gamma) = \log \hat{p}_m(\bm s, \gamma) - \log\text{det}(\bm B)$. In practice, this allows us to normalize $\bm s$ and $\bm S_1, \dots, \bm S_m$ before fitting, which generally enhances numerical stability.


Our choice (\ref{eq:mix_fun_choice}) has two main shortcomings: it is based on a normality assumption for $ \bm S$ and, most importantly, it does not take the sample size $m$ into account. In regard to the first issue: using higher moments of the simulated statistics to determine (\ref{eq:mix_fun_choice}) might be attractive, but our experience suggests that this would result in very unstable estimates. The second problem can be addressed by appropriately selecting the tuning parameter $\gamma$. Its value is critical for the performance of our method, and at f{\text{}}irst sight it not clear on what principle this choice should be based. However, saddlepoint approximations are exact for Gaussian densities \citep{butler2007saddlepoint}, hence $\gamma$ is fundamentally a complexity-controlling parameter, which determines the balance between two density estimators: the empirical saddlepoint, which is characterized by higher f{\text{}}lexibility and variance, and the normal distribution, which generally has higher bias, but lower variance. Hence, we propose to select $\gamma$ by $k$-fold cross-validation, as detailed in the Algorithm \ref{alCross}.
\begin{algorithm}
\caption{Cross-validation with nested normalization}
\label{alCross}
\begin{algorithmic}[1]
\STATE Create a grid of $r$ possible values, $\gamma_1, \dots, \gamma_r$, for the tuning parameter.
\STATE Simulate $m$ random vectors $ \bm S_1, \dots,  \bm S_m$ from the true density $p( \bm s)$ and divide them into $k$ folds. For simplicity, assume that $m$ is a multiple of $k$. Indicate with $\bar{\bm S}_t$ the vectors in the $t$-th fold, and with $\bar{\bm S}_{-t}$ the remaining $r = m(1-1/k)$ vectors. Let $\hat{p}_{r}^{-t}( \bm s, \gamma)$ be the EES density based on the vectors in $\bar{\bm S}_{-t}$.
\STATE For $i = 1, \dots, r$
\begin{enumerate}[leftmargin=0.5cm, itemindent=0cm]
\item[] For $t = 1, \dots, k$
\begin{enumerate}[leftmargin=0.5cm, itemindent=0.5cm]
\item[$\cdot$] Estimate the normalizing constant of $\hat{p}_{r}^{-t}( \bm s, \gamma_i)$ by importance sampling, that is
$$
\hat{z}_r^{-t}(\gamma_i) = \frac{1}{l} \sum_{j = 1}^l \frac{\hat{p}_r^{-t}( \bm S_j^*, \gamma_i)}{q( \bm S_j^*)}, 
\;\;\;\;\;
\bm S_j^* \sim q( \bm s), \;\;\; \text{for} \;\; j = 1,\dots,l. 
$$
A reasonably eff{\text{}}icient importance density $q( \bm s)$ can be obtained by f{\text{}}itting a multivariate normal density to the vectors in $\bar{\bm S}_{-t}$. Notice that (\ref{eq:gamma_req}) and (\ref{eq:normalCGF}) assure the boundedness of the importance weights, under this choice of $q( \bm s)$.
\item[$\cdot$] Using the normalized EES density,  
$$
\bar{p}_r^{-t}( \bm s, \gamma_i) = 
\frac{\hat{p}_r^{-t}( \bm s, \gamma_i)}{\hat{z}_r^{-t}(\gamma_i)},
$$
evaluate the negative log-likelihood of the validation data $\bar{\bm S}_t$.

\end{enumerate}
\end{enumerate}

\STATE Select the value $\gamma_i$ that minimizes the negative validation log-likelihood, averaged across the $k$ folds.
\end{algorithmic}
\end{algorithm}

In the Supplementary Material we show that, as $m$ and $l \rightarrow \infty$, Algorithm \ref{alCross} consistently selects the value of $\gamma$ which minimizes the Kullback-Leibler divergence between $\bar{p}( \bm s, \gamma)$ and $p(\bm s)$. The Gaussian case is recovered as $\gamma \rightarrow \infty$.


\section{Use within Synthetic Likelihood} \label{sec:withinSL}


We now describe how the proposed density estimator can be used within the context of SL, hence we restore all dependencies on the model parameters, $\bm \theta$. To obtain an initial estimate, $\bm \theta_I$, of the unknown parameters it is reasonable to maximize the synthetic likelihood based on the Gaussian approximation, which is less computationally expensive. Then, $\gamma$ can be selected using Algorithm \ref{alCross}, with $p( \bm s) = p( \bm s|  \bm \theta_I)$. Given $\gamma$, pointwise estimates of the synthetic likelihood can be based on the new density estimator by using a procedure analogous to Algorithm \ref{alPoint}, which we describe in the Supplementary Material.

In terms of computational effort, if we assume that $m$, the number of summary statistics simulated from $p(\bm s|\bm \theta)$, is much larger than $d$, then the cost of evaluating the Gaussian synthetic likelihood is $O(m d^2)$, which is the cost of obtaining $\hat{\bm \Sigma}_{\bm \theta}$. Calculating $\hat{K}_m''(\bm \lambda)$ has the same complexity, but solving the empirical saddlepoint equation (\ref{eq:empSadEq}) numerically implies that $\hat{K}_m''(\bm \lambda)$ will be evaluated at several values of $\bm \lambda$. The proposal described in Section \ref{sec:exSad} assures that the underlying root f{\text{}}inding problem is strongly convex, hence few iterations of a Newton-Raphson algorithm are generally suff{\text{}}icient to solve (\ref{eq:empSadEq}) with high accuracy. The computational cost of a synthetic likelihood estimate is then O($lmd^2)$, if the normalizing constant is estimated using $l$ importance samples. In practice, we have not yet encountered an example where the normalizing constant strongly depended on $\bm \theta$. However, the normalizing constant often varies significantly with $\gamma$. Hence, in the examples presented in this paper, we estimate the normalizing constant when selecting $\gamma$ using 
Algorithm \ref{alCross}, but we use the unnormalized EES density during parameter estimation.  


Before testing the ESS-based version of SL on the examples, we now prove that, under the conditions to be specified shortly, maximizing the synthetic likelihood leads to consistent parameter estimators. Here we denote the Gaussian-based synthetic likelihood with $\hat{p}_{G}(\bm s^0|\bm \theta)$ and its EES-based version with $\hat{p}_{S}(\bm s^0|\bm \theta)$. We firstly consider the Gaussian case and we prove identifiability, which means that the (scaled) synthetic likelihood converges to a function which is maximized at the true parameter vector, ${\bm \theta}_0$. This is guaranteed under the following assumptions.

\begin{Assumption} \label{assConv}
The summary statistics depend on a set of underlying observations $\bm Y_{1}, \dots, \bm Y_{n}$, and have mean and covariance matrix
$$
\bm \mu_{\bm \theta}^{n}=\mathbb{E}(\bm S_n|\,\bm \theta),\;\;\;\;\bm \Sigma_{\bm \theta}^{n}=\mathbb{E}\big\{(\bm S_n-\bm \mu_{\bm \theta}^{n})(\bm S_n-\bm \mu_{\bm \theta}^{n})^{T}|\, \bm \theta \big\},
$$
where $\bm S_n=S(\bm Y_{1},\dots,\bm Y_{n})$. In addition there exists $\delta>0$ such that,  for any $\bm \theta$, we have
$$
\hat{\bm \mu}_{\bm \theta}^{n} \rightarrow \bm \mu_{\bm \theta}  \;\;\;\;\text{and\;\;}\;\; n^{\delta}\hat{\bm \Sigma}_{\bm \theta}^{n}\rightarrow\bm \Sigma_{\bm \theta},
$$
in probability, as $m$ and $n \rightarrow \infty$. 
\end{Assumption}

\begin{Assumption} \label{assMaxEig}
Let ${}_*{\psi}_{\bm \theta}$ and ${}^*{\psi}_{\bm \theta}$ be, respectively, the smallest and the largest eigenvalues of the asymptotic (scaled) covariance matrix, $\bm \Sigma_{\bm \theta}$. There exists two positive constants, ${}_*{\psi}$ and ${}^*{\psi}$, such that ${}_*{\psi}_{\bm \theta}>{}_*{\psi}$ and ${}^*{\psi}_{\bm \theta}<{}^*{\psi}$ for any $\bm \theta$. 
\end{Assumption}

\begin{Assumption} \label{assOne}
$\bm \mu_{\bm \theta} = \mu(\bm \theta)$ is one to one.
\end{Assumption}

\begin{Theorem} \label{theoGaus}
If assumptions \ref{assConv} to \ref{assOne} hold, as $m$ and $n \rightarrow \infty$  the scaled synthetic log-likelihood, $n^{-\delta}\log\hat{p}_{G}(\bm s^0|\bm \theta)$, is asymptotically proportional to a function, $f_{\bm \theta_0}(\bm \theta)$, which is maximal at $\bm \theta = \bm \theta_0$.  
\end{Theorem}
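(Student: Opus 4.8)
The plan is to write the scaled Gaussian synthetic log-likelihood explicitly, pass to the limit using Assumptions~\ref{assConv}--\ref{assMaxEig}, and then identify the limiting function and locate its maximum using Assumption~\ref{assOne}. From Algorithm~\ref{alPoint}, up to an additive constant,
\begin{equation*}
\log\hat{p}_{G}(\bm s^0|\bm \theta) = -\tfrac{1}{2}\log\text{det}\big(\hat{\bm \Sigma}_{\bm \theta}^{n}\big) - \tfrac{1}{2}(\bm s^0 - \hat{\bm \mu}_{\bm \theta}^{n})^{T}\big(\hat{\bm \Sigma}_{\bm \theta}^{n}\big)^{-1}(\bm s^0 - \hat{\bm \mu}_{\bm \theta}^{n}).
\end{equation*}
Multiplying by $n^{-\delta}$, the log-determinant term contributes $-\tfrac{1}{2}n^{-\delta}\log\text{det}(n^{\delta}\hat{\bm \Sigma}_{\bm \theta}^{n}) + \tfrac{d}{2}\log n$; the first piece vanishes as $n\to\infty$ because, by Assumption~\ref{assConv}, $n^{\delta}\hat{\bm \Sigma}_{\bm \theta}^{n}\to\bm \Sigma_{\bm \theta}$, whose determinant is bounded away from $0$ and $\infty$ by Assumption~\ref{assMaxEig}, so $n^{-\delta}\log\text{det}(\cdot)\to 0$. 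The remaining additive term $\tfrac{d}{2}\log n$ does not depend on $\bm \theta$, hence it can be absorbed into the proportionality/additive normalization. The quadratic term, after multiplying by $n^{-\delta}$, becomes $-\tfrac{1}{2}(\bm s^0 - \hat{\bm \mu}_{\bm \theta}^{n})^{T}(n^{\delta}\hat{\bm \Sigma}_{\bm \theta}^{n})^{-1}(\bm s^0 - \hat{\bm \mu}_{\bm \theta}^{n})$.

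Next I would take limits in the quadratic form. The observed statistic $\bm s^0 = \bm S_n(\bm y^0)$ is generated at the true parameter $\bm \theta_0$, so under Assumption~\ref{assConv}, $\bm s^0 \to \bm \mu_{\bm \theta_0}$ in probability; likewise $\hat{\bm \mu}_{\bm \theta}^{n}\to\bm \mu_{\bm \theta}$ and $(n^{\delta}\hat{\bm \Sigma}_{\bm \theta}^{n})^{-1}\to\bm \Sigma_{\bm \theta}^{-1}$ (the inverse is well-defined in the limit by Assumption~\ref{assMaxEig}, which also gives the uniform boundedness needed to justify the continuous-mapping/Slutsky argument). Therefore $n^{-\delta}\log\hat{p}_{G}(\bm s^0|\bm \theta)$ converges in probability, after removing the $\bm \theta$-free terms, to
\begin{equation*}
f_{\bm \theta_0}(\bm \theta) = -\tfrac{1}{2}(\bm \mu_{\bm \theta_0} - \bm \mu_{\bm \theta})^{T}\bm \Sigma_{\bm \theta}^{-1}(\bm \mu_{\bm \theta_0} - \bm \mu_{\bm \theta}).
\end{equation*}
Since $\bm \Sigma_{\bm \theta}$ is positive definite (Assumption~\ref{assMaxEig}), $f_{\bm \theta_0}(\bm \theta)\le 0$ with equality iff $\bm \mu_{\bm \theta} = \bm \mu_{\bm \theta_0}$, and by the injectivity of $\bm \theta\mapsto\bm \mu_{\bm \theta}$ (Assumption~\ref{assOne}) this forces $\bm \theta = \bm \theta_0$. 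Hence $f_{\bm \theta_0}$ is uniquely maximized at $\bm \theta_0$, as claimed.

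The main obstacle I anticipate is handling the limiting operations carefully enough to be rigorous: the convergence statements in Assumption~\ref{assConv} are pointwise in $\bm \theta$ and in probability, so one must be explicit about whether ``asymptotically proportional'' is meant pointwise (which is all the statement claims) or uniformly in $\bm \theta$ — the latter would be needed to conclude that the \emph{argmax} of the finite-$n$ synthetic likelihood converges to $\bm \theta_0$, and would require an additional stochastic-equicontinuity or uniform-convergence hypothesis. For the statement as phrased, pointwise convergence plus the continuous mapping theorem and Slutsky's lemma suffice, with Assumption~\ref{assMaxEig} doing the work of keeping $\log\text{det}(\bm \Sigma_{\bm \theta})$ and $\|\bm \Sigma_{\bm \theta}^{-1}\|$ controlled so that no term blows up or degenerates in the limit. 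A secondary point worth stating cleanly is why $\bm s^0$ behaves like a draw at $\bm \theta_0$ and converges to $\bm \mu_{\bm \theta_0}$; this is exactly the $\hat{\bm \mu}_{\bm \theta}^{n}\to\bm \mu_{\bm \theta}$ part of Assumption~\ref{assConv} applied at $\bm \theta = \bm \theta_0$ with a single underlying dataset.
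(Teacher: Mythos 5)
Your proposal follows essentially the same route as the paper's proof: scale the Gaussian synthetic log-likelihood, show the log-determinant contribution is asymptotically negligible (or $\bm\theta$-free) under Assumption~\ref{assConv}, pass to the limit in the quadratic form to obtain $f_{\bm\theta_0}(\bm\theta)=-(\bm\mu_{\bm\theta_0}-\bm\mu_{\bm\theta})^T\bm\Sigma_{\bm\theta}^{-1}(\bm\mu_{\bm\theta_0}-\bm\mu_{\bm\theta})$, and use Assumptions~\ref{assMaxEig} and~\ref{assOne} to conclude the unique maximum at $\bm\theta_0$. The only blemishes are minor: the $\log n$ term should read $\tfrac{d\delta}{2}n^{-\delta}\log n$ (which in fact vanishes, as in the paper), and the convergence $\bm s^0\to\bm\mu_{\bm\theta_0}$ is more cleanly justified by the concentration implied by $\bm\Sigma_{\bm\theta_0}^{n}=O(n^{-\delta})$ (the Markov/Chebyshev argument the paper uses later) than by the $\hat{\bm\mu}^n_{\bm\theta}$ part of Assumption~\ref{assConv}.
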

\begin{Proof}
See Appendix \ref{app:idenGaus}. 
\end{Proof}

Here assumption \ref{assConv} guarantees pointwise convergence to $f_{\bm \theta_0}(\bm \theta)$, while assumptions \ref{assMaxEig} and  \ref{assOne} assure identifiability. The fact that $f_{\bm \theta_0}(\bm \theta)$ is maximal at the true parameter  is not itself sufficient to assure weak consistency, which is instead guaranteed under the additional condition that the convergence of the Gaussian synthetic likelihood is uniform \citep{van2000asymptotic}. To assure this, we make the following assumptions.

\begin{Assumption} \label{assComp}
The parameter space, $\bm \Theta \subset \mathbb{R}^p$, is compact and convex.
\end{Assumption}

\begin{Assumption} \label{assDerivBound}
The derivatives of $\hat{\bm \mu}_{\bm \theta}^{n}$ and $n^{\delta} \hat{\bm \Sigma}_{\bm \theta}^{n}$ are continuous and dominated by  two $O_p(1)$ positive random sequences, $a_{n,m}$ and $b_{n,m}$. More precisely 
$$
\bigg | \bigg| \frac{\partial \hat{\bm \mu}_{\bm \theta}^{n}}{\partial \theta_k} \bigg | \bigg|_2 \leq a_{n,m}, \;\;\;\;\; \text{and} \;\;\;\;\; \bigg|\bigg| \frac{\partial n^{\delta} \hat{\bm \Sigma}_{\bm \theta}^{n}}{\partial \theta_k} \bigg|\bigg|_2 \leq b_{n,m},
$$ 
for $k = 1, \dots, q$ and for any $\bm \theta \in \bm \Theta$.
\end{Assumption}

\begin{Assumption} \label{assWellCond}
Let ${}_*\hat{\psi}_{\bm \theta}^n$ and ${}^*\hat{\psi}_{\bm \theta}^n$ be, respectively, the smallest and the largest eigenvalue of  
$n^{\delta}\hat{\bm \Sigma}_{\bm \theta}^{n}$ and assume that there  exist two $O_p(1)$ positive random sequences, $c_{n, m}$ and $u_{n, m}$, such that  
$$
\big({}_*\hat{\psi}_{\bm \theta}^n\big)^{-1} \leq c_{n,m}, \;\;\; \text{and} \;\;\; {}^*\hat{\psi}_{\bm \theta}^n \leq u_{n,m},
$$ 
for any $\bm \theta \in \bm \Theta$.
\end{Assumption}

\begin{Assumption} \label{assEquicont}
The derivatives of the asymptotic mean vector, $\bm \mu_{\bm \theta}$, and (scaled) covariance matrix, $\bm \Sigma_{\bm \theta}$, are bounded. In particular, there exist two positive constants, $M_{\bm \mu}$ and $M_{\bm \Sigma}$, such that 
$$
\bigg | \bigg| \frac{\partial \bm \mu_{\bm \theta}}{\partial \theta_k} \bigg | \bigg|_2 \leq M_{\bm \mu}, \;\;\;\;\; \text{and} \;\;\;\;\; \bigg|\bigg| \frac{\partial {\bm \Sigma}_{\bm \theta}}{\partial \theta_k} \bigg|\bigg|_2 \leq M_{\bm \Sigma},
$$ 
for any $\bm \theta \in \bm \Theta$.
\end{Assumption}

\cite{newey1991uniform} shows that pointwise convergence in probability, proved as part of theorem \ref{theoGaus}, implies uniform convergence as long as: assumption \ref{assComp} holds; the derivatives of $n^{-\delta}\log\hat{p}_{G}(\bm s^0|\bm \theta)$ are continuous and dominated by an $O_p(1)$ random sequence; $f_{\bm \theta_0}(\bm \theta)$ is equicontinuous. Hence, in the Gaussian case, proving consistency requires only assuring that the last two requirements are met. 

\begin{Theorem} \label{theoConsGaus}
Let $\hat{\bm \theta}$ be the maximizer of the Gaussian synthetic likelihood. If assumptions \ref{assConv} to \ref{assEquicont} hold, then $\hat{\bm \theta}$ converges in probability to ${\bm \theta}_0$, as $m$ and  $n \rightarrow \infty$.
\end{Theorem}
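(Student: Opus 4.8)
The plan is to obtain consistency from Theorem~\ref{theoGaus} by upgrading the pointwise convergence proved there to uniform convergence on $\bm\Theta$ and then invoking the standard argmax theorem for $M$-estimators. Set $Q_{n,m}(\bm\theta):=n^{-\delta}\log\hat p_{G}(\bm s^{0}|\bm\theta)$; rescaling by the positive constant $n^{-\delta}$ does not move the maximiser, so $\hat{\bm\theta}$ maximises $Q_{n,m}$ as well. I would establish that (i) $Q_{n,m}$ converges uniformly in probability on $\bm\Theta$ to the limit $f_{\bm\theta_{0}}$ of Theorem~\ref{theoGaus}, and (ii) $f_{\bm\theta_{0}}$ has a well-separated maximum at $\bm\theta_{0}$, i.e. $\sup_{\|\bm\theta-\bm\theta_{0}\|\ge\varepsilon}f_{\bm\theta_{0}}(\bm\theta)<f_{\bm\theta_{0}}(\bm\theta_{0})$ for every $\varepsilon>0$. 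Theorem~5.7 of \cite{van2000asymptotic} then yields $\hat{\bm\theta}\to\bm\theta_{0}$ in probability.

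Point (ii) is essentially already recorded in Theorem~\ref{theoGaus}. Expanding $Q_{n,m}$ and discarding the terms that vanish --- the constant $-\tfrac d2 n^{-\delta}\log(2\pi)$ and $-\tfrac12 n^{-\delta}\log\det\hat{\bm\Sigma}^{n}_{\bm\theta}=-\tfrac12 n^{-\delta}\{\log\det(n^{\delta}\hat{\bm\Sigma}^{n}_{\bm\theta})-d\delta\log n\}$, both $o_{p}(1)$ uniformly by Assumption~\ref{assConv} --- the limit is $f_{\bm\theta_{0}}(\bm\theta)=-\tfrac12(\bm\mu_{\bm\theta_{0}}-\bm\mu_{\bm\theta})^{T}\bm\Sigma_{\bm\theta}^{-1}(\bm\mu_{\bm\theta_{0}}-\bm\mu_{\bm\theta})$, using that $\bm s^{0}$ converges in probability to $\bm\mu_{\bm\theta_{0}}$ (the summary statistics being consistent, since their covariance vanishes at rate $n^{-\delta}$ under Assumption~\ref{assConv}) and that $(n^{\delta}\hat{\bm\Sigma}^{n}_{\bm\theta})^{-1}\to\bm\Sigma_{\bm\theta}^{-1}$. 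By Assumption~\ref{assMaxEig} this quadratic form dominates $({}^{*}\psi)^{-1}\|\bm\mu_{\bm\theta_{0}}-\bm\mu_{\bm\theta}\|^{2}$, which by Assumption~\ref{assOne} is strictly positive whenever $\bm\theta\ne\bm\theta_{0}$; continuity of $\bm\mu_{\bm\theta}$ and compactness of $\bm\Theta$ (Assumption~\ref{assComp}) then give the required well-separation.

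For (i) I would invoke \cite{newey1991uniform}, which turns the pointwise convergence of Theorem~\ref{theoGaus} into uniform convergence provided $\bm\Theta$ is compact, $f_{\bm\theta_{0}}$ is equicontinuous, and the partial derivatives $\partial Q_{n,m}/\partial\theta_{k}$ are continuous and dominated, uniformly in $\bm\theta$, by an $O_{p}(1)$ sequence. Equicontinuity of $f_{\bm\theta_{0}}$ follows from Assumptions~\ref{assMaxEig} and~\ref{assEquicont}: differentiating the quadratic form produces terms bounded by products of $M_{\bm\mu}$, $M_{\bm\Sigma}$, $({}_{*}\psi)^{-1}$ and $\sup_{\bm\theta}\|\bm\mu_{\bm\theta_{0}}-\bm\mu_{\bm\theta}\|<\infty$, so $f_{\bm\theta_{0}}$ has bounded gradient on the convex set $\bm\Theta$ and is Lipschitz. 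Dominating $\partial Q_{n,m}/\partial\theta_{k}$ is the only step needing real care, and it is where Assumptions~\ref{assDerivBound}--\ref{assWellCond} enter: after the substitution $\bm A:=n^{\delta}\hat{\bm\Sigma}^{n}_{\bm\theta}$ (so $(\hat{\bm\Sigma}^{n}_{\bm\theta})^{-1}=n^{\delta}\bm A^{-1}$ and $\partial_{\theta_{k}}\hat{\bm\Sigma}^{n}_{\bm\theta}=n^{-\delta}\partial_{\theta_{k}}\bm A$), every power of $n^{\delta}$ coming from an inverse covariance cancels against the $n^{-\delta}$ prefactor of $Q_{n,m}$ and/or against the $n^{-\delta}$ carried by a covariance derivative, leaving expressions such as $\mathrm{tr}(\bm A^{-1}\partial_{\theta_{k}}\bm A)$, $(\partial_{\theta_{k}}\hat{\bm\mu}^{n}_{\bm\theta})^{T}\bm A^{-1}(\bm s^{0}-\hat{\bm\mu}^{n}_{\bm\theta})$ and $(\bm s^{0}-\hat{\bm\mu}^{n}_{\bm\theta})^{T}\bm A^{-1}(\partial_{\theta_{k}}\bm A)\bm A^{-1}(\bm s^{0}-\hat{\bm\mu}^{n}_{\bm\theta})$; each is bounded, uniformly over $\bm\Theta$, by products of the $O_{p}(1)$ sequences $a_{n,m},b_{n,m},c_{n,m},u_{n,m}$ together with $\|\bm s^{0}-\hat{\bm\mu}^{n}_{\bm\theta}\|=O_{p}(1)$. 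This gives the domination, Newey's conditions hold, and uniform convergence follows; combining (i) and (ii) through the argmax theorem completes the argument.

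The main obstacle is therefore not conceptual but the bookkeeping in (i): one must verify that the $n^{\delta}$ factors never accumulate when the Gaussian log-likelihood and its derivative are expanded. This is precisely why Assumption~\ref{assConv} is phrased through $n^{\delta}\hat{\bm\Sigma}^{n}_{\bm\theta}$ and why Assumptions~\ref{assDerivBound} and~\ref{assWellCond} bound the derivatives and eigenvalues of the \emph{rescaled} covariance rather than of $\hat{\bm\Sigma}^{n}_{\bm\theta}$ itself; granted that, the rest is a mechanical application of the cited uniform law of large numbers and $M$-estimation consistency results.
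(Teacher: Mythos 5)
Your proposal is correct and follows essentially the same route as the paper's proof: pointwise convergence from Theorem \ref{theoGaus} is upgraded to uniform convergence via \cite{newey1991uniform} by verifying compactness, equicontinuity of $f_{\bm \theta_0}$ (Assumptions \ref{assMaxEig} and \ref{assEquicont}), and domination of the derivatives of $n^{-\delta}\log\hat{p}_{G}(\bm s^0|\bm \theta)$ through the eigenvalue and derivative bounds of Assumptions \ref{assDerivBound}--\ref{assWellCond} together with a uniform $O_p(1)$ bound on $\|\bm s^0-\hat{\bm \mu}^n_{\bm \theta}\|$, and consistency then follows from \cite{van2000asymptotic}. The only difference is cosmetic: you make explicit the well-separation of the maximum of $f_{\bm \theta_0}$ (via continuity of $\bm \mu_{\bm \theta}$, injectivity and compactness), a point the paper leaves implicit in its appeal to identifiability, and you assert rather than derive the uniform bound on $\|\bm s^0-\hat{\bm \mu}^n_{\bm \theta}\|$, which the paper obtains by a short triangle-inequality and Markov argument.
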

\begin{Proof}
See Appendix \ref{app:consGaus}.
\end{Proof}

We now consider the EES-based synthetic likelihood, and we focus on the un-normalized density estimator, which is cheaper to compute in practice. To prove identifiability we require the two following conditions to hold, in addition to assumptions \ref{assConv}, \ref{assMaxEig} and \ref{assOne}.

\begin{Assumption} \label{assMoment}
For every $n$, the moment generating function of $\bm S_n$ exists for $ \bm \lambda \in I$, where $I$ is a non-vanishing subset of ${\mathbb{R}}^d$ containing the origin.
\end{Assumption}

\begin{Assumption} \label{assGamma}
Let $\hat{\gamma}_{\bm \theta_I}^{n}$ be the chosen tuning parameter, corresponding to simulation effort $m$ and sample size $n$. As $m$ and $n \rightarrow \infty$, there exists a constant $c>0$ such that
$$\text{Prob}(\hat{\gamma}_{\bm \theta_I}^{n} < c) \rightarrow 0,
$$
for any initialization $\bm \theta_{I}$.
\end{Assumption}

\begin{Theorem} \label{theoSaddle}
If assumptions \ref{assConv}, \ref{assMaxEig}, \ref{assOne}, \ref{assMoment} and \ref{assGamma} hold, as $m$ and $n \rightarrow \infty$  the scaled synthetic log-likelihood, $n^{-\delta}\log\hat{p}_{S}(\bm s^0|\bm \theta)$, is asymptotically proportional to a function, $f_{\bm \theta_0}(\bm \theta)$, which is maximal at $\bm \theta_0.$ 
\end{Theorem}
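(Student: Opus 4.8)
The plan is to show that, for each fixed $\bm\theta$, the scaled EES log-likelihood $n^{-\delta}\log\hat p_S(\bm s^0|\bm\theta)$ converges in probability to the \emph{same} limit function $f_{\bm\theta_0}(\bm\theta) = -\tfrac12(\bm\mu_{\bm\theta_0}-\bm\mu_{\bm\theta})^T\bm\Sigma_{\bm\theta}^{-1}(\bm\mu_{\bm\theta_0}-\bm\mu_{\bm\theta})$ obtained in Theorem \ref{theoGaus} for the Gaussian estimator; identifiability then follows exactly as there, since assumptions \ref{assMaxEig} and \ref{assOne} make $f_{\bm\theta_0}$ strictly maximal at $\bm\theta_0$. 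Writing the EES log-density from (\ref{eq:empSadEstim}) with the modified cumulant generating function (\ref{eq:modCGF}) as
$$
\log\hat p_S(\bm s^0|\bm\theta) = -\tfrac{d}{2}\log 2\pi - \tfrac12\log\det \hat K''_m(\hat{\bm\lambda}_m,\hat\gamma,\bm s^0) + E_n(\bm\theta),
$$
where $\hat\gamma := \hat\gamma^n_{\bm\theta_I}$ is the cross-validated tuning value and $E_n(\bm\theta) := \hat K_m(\hat{\bm\lambda}_m,\hat\gamma,\bm s^0) - \hat{\bm\lambda}_m^T\bm s^0 = -\sup_{\bm\lambda}\{\bm\lambda^T\bm s^0 - \hat K_m(\bm\lambda,\hat\gamma,\bm s^0)\}$, I will treat the three pieces separately: the additive constant is trivially $o(n^\delta)$; the exponent $E_n(\bm\theta)$ carries the limit $f_{\bm\theta_0}(\bm\theta)$; and the log-determinant is $O_p(\log n)$, hence negligible after scaling, just as $\log\det\hat{\bm\Sigma}^n_{\bm\theta}$ is in the Gaussian proof.

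The mechanism is that (\ref{eq:modCGF}) is a convex combination $g_n\hat K_m(\bm\lambda) + (1-g_n)\hat G_m(\bm\lambda)$ with weight $g_n := g(\bm s^0,\hat\gamma)\in[0,1]$ and Gaussian surrogate $\hat G_m$ of (\ref{eq:normalCGF}), and that away from $\bm\theta_0$ the mixture collapses onto $\hat G_m$ --- for which the saddlepoint is exact --- fast enough that the EES and Gaussian log-densities become $n^{-\delta}$-equivalent. By assumption \ref{assConv} the asymptotic covariance scale is $n^{-\delta}$, so the Mahalanobis distance $D_n(\bm\theta) := (\bm s^0-\hat{\bm\mu}^n_{\bm\theta})^T(\hat{\bm\Sigma}^n_{\bm\theta})^{-1}(\bm s^0-\hat{\bm\mu}^n_{\bm\theta})$ entering (\ref{eq:mix_fun_choice}) satisfies $n^{-\delta}D_n(\bm\theta)\to M_{\bm\theta_0}(\bm\theta):=-2f_{\bm\theta_0}(\bm\theta)\ge 0$ in probability, exactly as in the proof of Theorem \ref{theoGaus}. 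If $\bm\theta\ne\bm\theta_0$, assumption \ref{assOne} gives $M_{\bm\theta_0}(\bm\theta)>0$, so $D_n(\bm\theta)$ diverges like $n^\delta$; then (\ref{eq:mix_fun_choice}) yields $g_n\le[e^{-D_n(\bm\theta)}\,\mathrm{poly}(D_n(\bm\theta))]^{\hat\gamma}$, and assumption \ref{assGamma} keeps $\hat\gamma$ bounded away from $0$ with probability tending to one, so $g_n$ vanishes exponentially fast in $n^\delta$. If $\bm\theta=\bm\theta_0$, $g_n$ need not vanish, but now $\bm s^0$ lies within $O_p(n^{-\delta/2})$ of $\hat{\bm\mu}^n_{\bm\theta_0}$ (again by assumption \ref{assConv}), so $D_n(\bm\theta_0)=O_p(1)$ and the saddlepoint root $\hat{\bm\lambda}_m$ and Hessian $\hat K''_m(\hat{\bm\lambda}_m,\hat\gamma,\bm s^0)$ stay comparable to $\bm 0$ and $\hat{\bm\Sigma}^n_{\bm\theta_0}$, which makes $E_n(\bm\theta_0)=O_p(1)$ and $\log\det\hat K''_m(\hat{\bm\lambda}_m,\hat\gamma,\bm s^0) = -d\delta\log n + O_p(1)$, so $n^{-\delta}\log\hat p_S(\bm s^0|\bm\theta_0)\to 0 = f_{\bm\theta_0}(\bm\theta_0)$.

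To make the collapse rigorous I will sandwich $E_n(\bm\theta)$. The lower bound comes from Jensen's inequality, $\hat K_m(\bm\lambda)\ge\bm\lambda^T\hat{\bm\mu}^n_{\bm\theta}$, which dominates the supremand by that of $\bm\lambda^T\hat{\bm\mu}^n_{\bm\theta}+\tfrac{1-g_n}{2}\bm\lambda^T\hat{\bm\Sigma}^n_{\bm\theta}\bm\lambda$ and gives $E_n(\bm\theta)\ge-\tfrac{1}{2(1-g_n)}D_n(\bm\theta)$. The upper bound comes from evaluating the supremand at $\bm\lambda=\bm 0$ (giving $E_n(\bm\theta)\le 0$) and at the Gaussian saddlepoint $\tilde{\bm\lambda}_n := (\hat{\bm\Sigma}^n_{\bm\theta})^{-1}(\bm s^0-\hat{\bm\mu}^n_{\bm\theta})$, which after a short computation gives $E_n(\bm\theta)\le-\tfrac{1+g_n}{2}D_n(\bm\theta)+g_nR_n$ with remainder $0\le R_n = \hat K_m(\tilde{\bm\lambda}_n)-\tilde{\bm\lambda}_n^T\hat{\bm\mu}^n_{\bm\theta}\le\|\tilde{\bm\lambda}_n\|\big(\|\hat{\bm\mu}^n_{\bm\theta}\|+\max_{i\le m}\|\bm S_i\|\big)$. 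Assumption \ref{assMoment} (finiteness of the moment generating function) controls the tails of the simulated statistics, so $\max_i\|\bm S_i\|$ grows only slowly in $m$ and $\|\tilde{\bm\lambda}_n\|=O_p(n^\delta)$, whence $n^{-\delta}g_nR_n\to 0$ for $\bm\theta\ne\bm\theta_0$ (where $g_n$ decays exponentially), and $n^{-\delta}E_n(\bm\theta_0)\to 0$ via the local expansion above. Combined with $n^{-\delta}D_n(\bm\theta)\to-2f_{\bm\theta_0}(\bm\theta)$ and $g_n\to 0$, both bounds converge to $f_{\bm\theta_0}(\bm\theta)$. For the determinant, $\hat K''_m(\hat{\bm\lambda}_m,\hat\gamma,\bm s^0) = g_n\hat K''_m(\hat{\bm\lambda}_m)+(1-g_n)\hat{\bm\Sigma}^n_{\bm\theta}$ has smallest eigenvalue $\ge(1-g_n)\,{}_*\hat\psi^n_{\bm\theta}=\Omega_p(n^{-\delta})$ by assumption \ref{assMaxEig} and largest eigenvalue $O_p(1)$, so $\log\det = O_p(\log n)$ and its scaled version vanishes. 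Putting the pieces together yields $n^{-\delta}\log\hat p_S(\bm s^0|\bm\theta)\to f_{\bm\theta_0}(\bm\theta)$ for every $\bm\theta$, maximal at $\bm\theta_0$ by assumptions \ref{assMaxEig} and \ref{assOne}.

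I expect the real obstacle to be the upper half of the sandwich for $\bm\theta\ne\bm\theta_0$: certifying that the empirical-saddlepoint contamination $g_nR_n$ is negligible at the $O(n^\delta)$ resolution at which the whole argument is calibrated. This is exactly where assumptions \ref{assMoment} and \ref{assGamma} do the work --- the former prevents the simulated statistics, and hence $\hat K_m$ and $\tilde{\bm\lambda}_n$, from blowing up too fast, while the latter stops the cross-validated $\hat\gamma$ from collapsing to $0$, which would destroy the exponential decay of $g_n$. A secondary nuisance is keeping track of the two scales at play --- the simulated statistics sit on an $n^{-\delta/2}$ scale whereas $g_n$ and $\tilde{\bm\lambda}_n$ live on the $n^\delta$ scale --- together with the interplay of the Monte-Carlo limit $m\to\infty$ and the data limit $n\to\infty$; a very mild joint-growth restriction on $m$ relative to $n$ (of the order $\log m = o(e^{cn^\delta})$, met by any practically reasonable choice of $m$) suffices to conclude $n^{-\delta}g_nR_n\to 0$.
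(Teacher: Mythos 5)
Your proposal is correct in substance and follows essentially the same route as the paper: Appendix \ref{app:consSaddle} simply asserts the expansion (\ref{eq:limTail}), i.e.\ that $\log\hat{p}_{S}(\bm s^0|\bm\theta)$ equals $\log\hat{p}_{G}(\bm s^0|\bm\theta)$ plus a remainder of order $e^{-\hat{\gamma}_{\bm\theta}^{n}(\bm s^0-\hat{\bm\mu}_{\bm\theta}^{n})^{T}(\hat{\bm\Sigma}_{\bm\theta}^{n})^{-1}(\bm s^0-\hat{\bm\mu}_{\bm\theta}^{n})}$, observes that after multiplication by $n^{-\delta}$ this remainder vanishes under assumptions \ref{assConv}, \ref{assMaxEig} and \ref{assGamma}, and then invokes Theorem \ref{theoGaus} for identifiability --- exactly the collapse-onto-the-Gaussian mechanism you implement. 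Your Legendre-transform sandwich, eigenvalue control of the determinant, and separate treatment of $\bm\theta=\bm\theta_0$ merely make explicit what the paper compresses into that one-line expansion (including the mild $m$-versus-$n$ growth caveat, which the paper leaves implicit).
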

\begin{Proof}
See Appendix \ref{app:consSaddle}. 
\end{Proof}

Notice that the asymptotic function, $f_{\bm \theta_0}(\bm \theta)$, mentioned in theorems \ref{theoGaus} and \ref{theoSaddle}, is the same under either density estimator. As in the Gaussian case, weak consistency is guaranteed under identifiability and uniform convergence to $f_{\bm \theta_0}(\bm \theta)$. Given assumption \ref{assComp}, and the fact that the equicontinuity of $f_{\bm \theta_0}(\bm \theta)$ has already been proven in the proof of theorem \ref{theoConsGaus}, uniform convergence is guaranteed as long as the derivatives of $n^{-\delta}\log\hat{p}_{S}(\bm s^0|\bm \theta)$ are continuous and dominated by an $O_p(1)$ sequence. In the Gaussian case this was assured under assumptions  on the derivatives of $\hat{\bm \mu}_{\bm \theta}^{n}$ and $n^{\delta} \hat{\bm \Sigma}_{\bm \theta}^{n}$, and on the eigenvalues of the latter. Given the complexity of the EES density, under this density estimator we prefer to impose conditions directly on $n^{-\delta}\log\hat{p}_{S}(\bm s^0|\bm \theta)$, rather than on $K(\bm \lambda)$, $K'(\bm \lambda)$ and $K''(\bm \lambda)$.

\begin{Assumption} \label{assDerivBoundSaddle}
The derivatives of the synthetic log-likelihood based on the EES density are continuous and dominated by an $O_p(1)$ random sequence, $v_{n, m}$, that is 
$$
\bigg| \frac{\partial n^{-\delta}\log\hat{p}_{S}(\bm s^0|\bm \theta)}{\partial \theta_k} \bigg | \leq v_{n, m},
$$ 
for $k = 1, \dots, q$ and for any $\bm \theta \in \bm \Theta$.
\end{Assumption}

\begin{Theorem} \label{theoConsSaddle}
Let $\hat{\bm \theta}$ be the maximizer of the EES-based synthetic likelihood. If assumptions \ref{assConv}, \ref{assMaxEig}, \ref{assOne}, \ref{assComp}, \ref{assEquicont}, \ref{assMoment}, \ref{assGamma} and \ref{assDerivBoundSaddle} hold, then $\hat{\bm \theta}$ converges weakly to ${\bm \theta}_0$, as $m$ and $n \rightarrow \infty$.
\end{Theorem}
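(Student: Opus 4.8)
The plan is to reproduce, for the EES density estimator, the same three-step route used for the Gaussian case in Theorem~\ref{theoConsGaus}: (a) pointwise convergence of the scaled objective together with identifiability of $\bm\theta_{0}$; (b) an upgrade to uniform convergence; and (c) the classical argmax-consistency theorem. Write $Q_{n,m}(\bm\theta):=n^{-\delta}\log\hat{p}_{S}(\bm s^{0}|\bm\theta)$. Step~(a) is already done: under assumptions~\ref{assConv}, \ref{assMaxEig}, \ref{assOne}, \ref{assMoment} and~\ref{assGamma}, Theorem~\ref{theoSaddle} gives that $Q_{n,m}$ is asymptotically proportional to $f_{\bm\theta_{0}}$, which is maximal at $\bm\theta_{0}$, and the remark following Theorem~\ref{theoSaddle} tells us that this limiting function is exactly the one arising in the Gaussian analysis. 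So only step~(b) and the well-separatedness needed in step~(c) remain.

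For step~(b) I would invoke the generic uniform law of \cite{newey1991uniform}, precisely in the form quoted just before Theorem~\ref{theoConsGaus}: given (1) compactness of $\bm\Theta$, (2) continuity of the derivatives of the objective together with their domination by an $O_p(1)$ random sequence, and (3) equicontinuity of $f_{\bm\theta_{0}}$, pointwise convergence in probability upgrades to $\sup_{\bm\theta\in\bm\Theta}\bigl|Q_{n,m}(\bm\theta)-f_{\bm\theta_{0}}(\bm\theta)\bigr|\to 0$ in probability. Condition~(1) is assumption~\ref{assComp}; condition~(2) is exactly assumption~\ref{assDerivBoundSaddle}, with dominating sequence $v_{n,m}$ (this being the reason the paper imposes that assumption directly on $\log\hat{p}_{S}$ rather than on $K$, $K'$, $K''$ and $g(\bm s,\gamma)$); and condition~(3) was already established inside the proof of Theorem~\ref{theoConsGaus} from assumption~\ref{assEquicont} (boundedness of the derivatives of $\bm\mu_{\bm\theta}$ and $\bm\Sigma_{\bm\theta}$), and it applies verbatim here because $f_{\bm\theta_{0}}$ is the same function. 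As elsewhere in this section, all $O_p(1)$ statements are read jointly as $m,n\to\infty$.

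For step~(c), note that $f_{\bm\theta_{0}}$ is continuous on the compact set $\bm\Theta$ (it inherits continuity from $\bm\mu_{\bm\theta}$ and $\bm\Sigma_{\bm\theta}$, whose derivatives are bounded by assumption~\ref{assEquicont}), while assumptions~\ref{assMaxEig} and~\ref{assOne} force $\bm\theta_{0}$ to be its unique maximiser --- this is the identifiability content already used in Theorem~\ref{theoSaddle}. Continuity plus a unique maximiser on a compact set yields the well-separated-maximum property $\sup_{\{\bm\theta:\,\|\bm\theta-\bm\theta_{0}\|\ge\epsilon\}}f_{\bm\theta_{0}}(\bm\theta)<f_{\bm\theta_{0}}(\bm\theta_{0})$ for every $\epsilon>0$. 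Combining this with the uniform convergence from step~(b) and feeding it into the standard M-estimator consistency theorem \citep[Theorem~5.7]{van2000asymptotic} gives $\hat{\bm\theta}\to\bm\theta_{0}$ in probability, which is the claim.

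The argument is largely bookkeeping once Theorems~\ref{theoSaddle} and~\ref{theoConsGaus} are in hand; the one place that needs genuine care is confirming that the equicontinuity argument for $f_{\bm\theta_{0}}$ from the proof of Theorem~\ref{theoConsGaus} really does transfer --- i.e.\ that the limiting objective generated by the EES construction coincides with the Gaussian one not merely at the maximiser but as a function on all of $\bm\Theta$ --- since this is exactly what lets us avoid re-deriving smoothness of the limit directly from $K(\bm\lambda)$, $K'(\bm\lambda)$ and $K''(\bm\lambda)$. This coincidence is asserted in the remark after Theorem~\ref{theoSaddle} and should be spelled out explicitly in the write-up. A secondary point worth noting is that assumption~\ref{assDerivBoundSaddle} is placed directly on $\log\hat{p}_{S}$ rather than derived from primitive conditions on the empirical cumulant generating function and its derivatives; tracing it back to such conditions is possible in principle but is not attempted here.
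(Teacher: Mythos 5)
Your proposal is correct and follows essentially the same route as the paper's own (very brief) proof: Theorem \ref{theoSaddle} for pointwise convergence and identifiability, \cite{newey1991uniform} with assumptions \ref{assComp}, \ref{assEquicont} and \ref{assDerivBoundSaddle} for uniform convergence (reusing the equicontinuity of $f_{\bm \theta_0}$ already established in the proof of Theorem \ref{theoConsGaus}), and the standard M-estimation consistency result of \cite{van2000asymptotic}. Your added remarks on well-separatedness of the maximum and on the coincidence of the limiting objective with the Gaussian one are correct elaborations of what the paper leaves implicit.
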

\begin{Proof}
Theorem \ref{theoSaddle} assures pointwise convergence and identifiability under assumptions \ref{assConv}, \ref{assMaxEig}, \ref{assOne}, \ref{assMoment} and \ref{assGamma}. Pointwise converge, together with assumptions \ref{assComp}, \ref{assEquicont} and \ref{assDerivBoundSaddle} guarantee uniform convergence in probability \citep{newey1991uniform}. Uniform convergence and identifiability are sufficient conditions for weak consistency \citep{van2000asymptotic}.
\end{Proof}

\section{Simple recruitment, boom and bust model} \label{sec:crashModel}

Figure \ref{fig: crash_dist}a shows two trajectories simulated from model (\ref{eq:crashModel}), using parameters $r=0.4$, $\kappa=50$, $\alpha=0.09$ and $\beta=0.05$. To compare ABC with the Gaussian and EES-based version of SL, we simulate 100 pseudo-observed datasets of length $T=300$ using the above parameters. Given that we are not interested in estimating $N_0$, we discards the first 50 times steps to lose any transient behaviour from the system. We use the remaining 250 steps of each trajectory to compute the following summary statistics: mean and smallest population, number of times the population consists of one or less individuals, number of population peaks and square-root of the minimal time gap between two consecutive peaks (a peak is occurring at time $t$ if $N_{t+1}-N_{t} \leq 30$). Under ABC, we obtain MAP estimates of the model parameters using the approach of \cite{rubio2013simple}. This consists of sampling the approximate posterior, and then maximizing a kernel density estimate of it. We perform the sampling step using the MCMC approach of \cite{marjoram2003markov}, and we maximize the approximate posterior using the mean shift algorithm \citep{fukunaga1975estimation}. The ABC tolerance was chosen using the approach of \cite{wegmann2009efficient}, which will be described in Section \ref{sec:shifted}, using $10^6$ simulations and target acceptance rate $10^{-3}$. We use the uniform priors $r\in(0, 1)$, $\kappa\in(10, 80)$, $\alpha \in (0, 1)$ and $\beta\in(0, 1)$, so that MLE and MAP estimates are equivalent. For SL we use $m=5\times10^{3}$ and estimate $\gamma=4 \times 10^{-4}$ using Algorithm \ref{alCross}, with $l=10^4$. While the Supplementary Material gives further details about the simulation setting, it is important to point out here that we use the same number of simulations from the model under all methods.

\begin{figure}
\centering
\includegraphics[scale = 0.43]{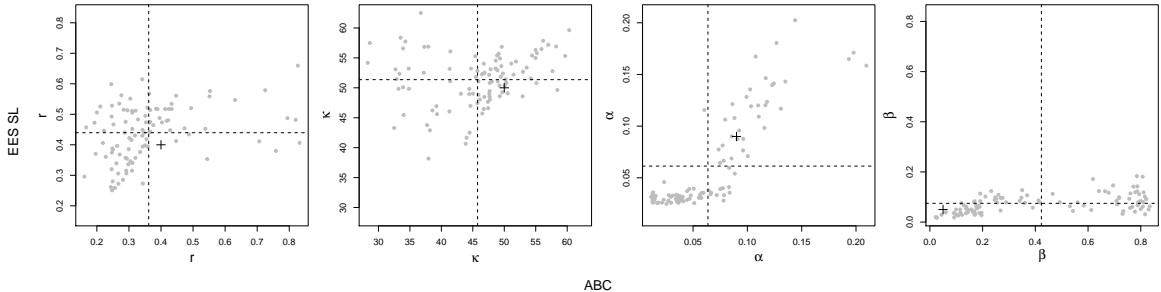}
\caption{MLE estimates obtained using EES SL versus MAP estimates produced by ABC, under model (\ref{eq:crashModel}). The dashed lines and the black crosses indicate, respectively, the mean estimates under each method and the true parameter values.} 
\label{fig: EES_ABC}
\end{figure}

Figure 2 compares the estimates obtained using ABC and EES-based SL, while Table 1 reports the true parameters, together with the means and RMSEs for all three methods.  ABC struggles to identify the arrival rate, $\beta$, and it often grossly overestimates $r$ and underestimates
$\kappa$, so that its RMSE performances is substantially worse overall than that of either SL method. Comparing the synthetic likelihood methods to each other, EES-SL has lower RMSE than Gaussian SL for all parameters. The mean for $\beta$ is also substantially closer to the true parameter for EES-SL, as expected given the shape of the distribution of the most relevant statistic, shown in Figure 1c. This very simple example clearly illustrates that EES-SL offers non-negligible benefits when important statistics are highly non-Gaussian.

\begin{table}  
\begin{center}
\begin{tabular}{cccccc} 
 \\
\hline
Param. & Truth & ABC & Gaus. SL & EES SL & Scale  \\
\hline 
    $r$ & $4$ &  3.6(1.5) &  4.5(1.1) &  4.4(\underline{0.97}) & $10^{-1}$   \\ 
    $\kappa$ & $50$ &  45.7(8.5)  &  51.7(4.8) &  51.4(\underline{4.5}) &  1  \\ 
    $\alpha$ & $9$ &  6.3(\underline{5}) &  6.2(5.7) &  6.1(5.4) & $10^{-2}$   \\ 
    $\beta$ & $5$ &  42.4(46.7) &  10.5(7.3) &  7.5(\underline{4.4}) & $10^{-2}$   \\
\hline
\end{tabular} 
\end{center}
\caption{True parameters, means and RMSEs (in parentheses) of the estimates using ABC and the two version of SL, for model (\ref{eq:crashModel}). For each parameter, the lowest RMSE is underlined.}
\label{tab:MSE_crash}
\end{table}  


\section{Multivariate shifted exponential distribution} \label{sec:shifted}

Here we consider a toy example, whose purpose is illustrating how the performance of the Gaussian and EES versions of SL compares with that of tolerance-based ABC algorithms, as the dimensionality of the model increases. In particular, let $ \bm S$ be a $d$-dimensional random vector, where each marginal follows a shifted exponential distribution
\begin{equation} \label{eq:shiftModel}
S_k \sim \theta_k + \text{Exp}(\beta), \;\;\; \text{for} \; k = 1, \dots, d.
\end{equation}
The plot in Figure \ref{fig: cv_plot}a contains the results of a 10-fold cross-validation run, obtained using $d = 10$, $l = 10^3$, $m = 10^4$, $\beta = 0.5$ and $\theta_1 = \cdots = \theta_d = 0$. The cross-validation curve is minimized by $\gamma = 5 \times 10^{-3}$, and the plot in Figure \ref{fig: cv_plot}b shows the true and approximate marginal densities of one component $S_k$. The EES approximation to the marginal, obtained by marginalizing the $d$-dimensional f{\text{}}it, is clearly more accurate than a normal density. 

\begin{figure}
\centering
\includegraphics[scale = 0.5]{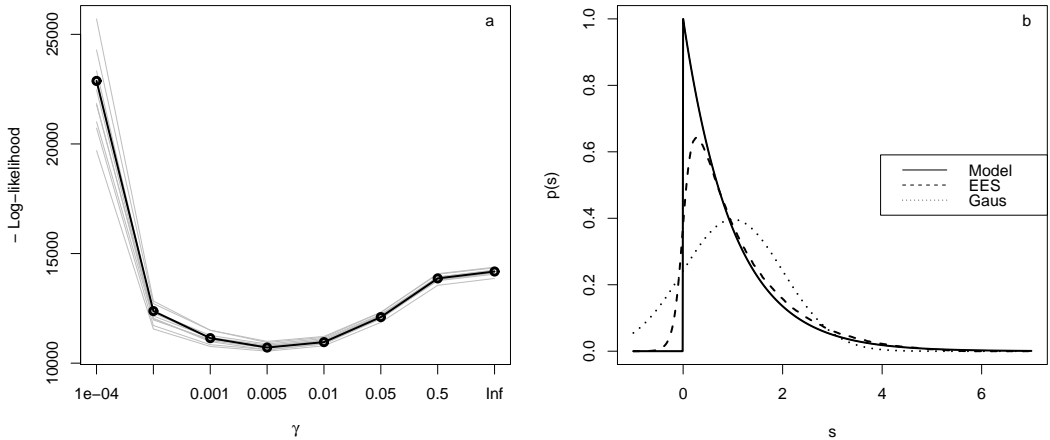}
\caption{Shifted exponential model. a: Curves from 10-fold cross-validation, the black line is their average. b: True $\text{Exp}(\beta)$ density (black), EES (dashed) and normal (dotted) approximation.}
\label{fig: cv_plot}
\end{figure}


To demonstrate the usefulness of EES in the context of SL, we use it to estimate the shifts $\theta_1, \dots, \theta_d$, all of which are equal to $0$. In particular, we simulate a single vector of observed statistics, $\bm s^0$, from (\ref{eq:shiftModel}), and we maximize the resulting synthetic likelihood, using either a Gaussian density or EES. Given that SL and ABC algorithms are often motivated as approximations to full likelihood inference, all the MSEs reported in the remainder of this section quantify deviations from the full MLE, which is $\bm s^0$, not from the true parameters. Hence, the bias of Gaussian SL estimates is $1/\beta$. By averaging the squared errors across the 10 dimensions, we obtain MSEs equal to 3.8 and 0.56, using the normal and the EES approximation respectively. In an analogous 20-dimensional run, using $m = 5\times 10 ^4$, the MSE was reduced from 4.1 to 1.26. P-values from t-test for differences in log-absolute errors were around $10^{-6}$ in both runs. 

It is possible to derive analytically how a tolerance-based ABC approximation would perform under this model. The details are reported in the Supplementary Material. Assume that the likelihood $p(\bm s^0|\bm \theta)$ is approximated by $p(||\bm s^0 - \bm s||_{\infty} < \epsilon|\bm \theta)$, where $\epsilon > 0$ is the tolerance. Given that we are interested in deviations from the full MLE, we can impose $\bm s^0 = \bm 0$ without loss of generality. If we use independent uniform priors on $[\psi, 0]$ for each parameter, where $\psi < -\epsilon$, the posterior mode is at $\theta_k = -\epsilon$, for $k=1,\dots,d$. Hence, the MSE corresponding to the MAP estimate is equal to $\epsilon^2$. This implies that, to achieve MSEs equal to those of EES, $\epsilon$ would need to be set to $\sqrt{0.56}$ and $\sqrt{1.26}$, respectively in the 10 and 20-dimensional setting. The corresponding acceptance probabilities, obtained by simulating statistic vectors using parameters $\bm \theta$ fixed to the MAP, are of order $10^{-3}$ and $10^{-4}$. In 40 dimensions, obtaining an MSE equal to $2$ would lead to an acceptance ratio at the MAP of order $10^{-5}$. Notice that these are upper bounds, because the acceptance probability is maximal at the MAP.

This analysis suggests that, in order to match the MSE achieved by SL, the computational budget of an ABC algorithm would need to be increased rapidly as the number of dimensions grows. Further, the relation between $\epsilon$ and the MSE is generally not known in practice. A popular approach (see e.g. \cite{wegmann2009efficient}) is to simulate a large number of parameter vectors from the prior, then simulate a statistics vector from the model using each of them and select $\epsilon$ so that a small percentage of these are accepted. To quantify how the computational cost of this tuning phase depends on the prior, we reverse this process and assume that the values of $\epsilon$ are as given above. In 10 dimensions, $\psi$ would need be in $[-2.1, -\epsilon]$, in order to achieve an acceptance probability of order $10^{-4}$, while in 20 dimension, $\psi \in [-1.9, -\epsilon]$ leads an acceptance probability of order $10^{-5}$. Hence, to obtain only just tolerable acceptance rates during the tuning phase, very accurate prior information must be available, especially in high dimensions. In an applied setting, prior information is often rather vague, hence $\epsilon$ needs to be tuned using more sophisticated approaches, such as the sequential algorithm of \cite{toni2009approximate}. While such methods can alleviate the effort needed to select $\epsilon$, performing extensive ABC tuning runs is still onerous when working with computationally intensive models, such as the one described in the next section.

\section{Formind forest model} \label{sec:formind_model}

\subsection{The model}

To test our proposal in a realistic setting, we consider Formind, an individual-based model describing the main natural processes driving forests dynamics. Here we describe its basic features, while we refer to \cite{dislich2009simulating} and to \cite{fischer2016lessons} for detailed descriptions of the model and of the scientific questions it can be used to address.

The model describes the growth and population dynamics of tree individuals in a simulation area that is divided in $20\times20$m patches, with individual trees being assigned explicitly to one patch. Tree species with similar characteristics are grouped into Plant Functional Types (PFTs). A constant input of seeds deposits on average $s_j$ seeds of the $j$-th PFT per hectare per year. The main factor determining both seed establishment and growth is the light climate in the patch. For example, pioneer types will establish only in patches relatively free of overshadowing trees, while late successional trees are able to grow below a dense canopy. Trees are subject to a baseline mortality rates $m_{j}$, which is specif{\text{}}ic to each PFT.

%

In the context of Formind, the need for approximate simulation-based methods comes from the complexity of the model. Indeed, Formind was developed with a focus on ecological plausibility, rather than statistical tractability, and most of its submodels describe highly non-linear biological processes, containing one or more sources of randomness. Most importantly, the raw output of Formind is the collection of all the characteristics of individual trees in the simulations area, which obviously do not correspond to individuals present in the actual survey data. Hence, it is necessary to work with summary statistics.

Formind is computationally intensive even when few PFTs are included and, given initial conditions and parameters, the simulated forest needs to be run for several hundred years, before the distribution of the summary statistics reaches equilibrium. This means that, from a practical point of view, it is critical to avoid lengthy tuning runs, such as those needed to select the tolerance $\epsilon$ in ABC methods.

\subsection{Simulation Results}  \label{sec:formind_simul}

We consider two PFTs, pioneer and late successional, and we reduce the model output to 6 summary statistics. In particular, to verify whether then new density estimator can deal with large departures from normality, we used the following transformed statistics   
$$
S_{jk} = \alpha_{jk}^{\frac{ C_{jk} - \psi_{jk} }{\sigma_{jk}}},
\;\; \text{for} \; j \in \{1, 2\}, \; k \in \{1, 2, 3\},    
$$
where $C_{jk}$ is the number of trees of the $j$-th PFT falling in the $k$-th diameter class, while $\alpha_{jk}$, $\psi_{jk}$, and $\sigma_{jk}$ are constants, whose values are reported in the Supplementary Material. The diameter categories used for each PFT correspond to trees with small, medium or large diameters.

We simulated 24 datasets from the model and estimated the baseline mortality rates and seed input intensities of the two PFTs by maximizing the synthetic likelihood, using both the normal and the EES approximations. In both cases we used $m = 10^4$ simulated summary statistics and, under EES, $\gamma$ was f{\text{}}ixed to $5.5 \times 10^{-3}$, chosen using Algorithm \ref{alCross} with $l = 10^3$. Table \ref{tab:MSE_eesa} reports the true parameters, together with the means and RMSEs of the estimates, from the normal or the EES approximations. See the Supplementary Material for more details about the optimization setting.
\begin{table}  
\begin{center}
\begin{tabular}{cccccc} 
 \\
\hline
Param. & Truth & Gaus. SL & EES SL & Scale & P-value \\
\hline 
    $\mu_{pio}$ & $5$ & 4.7 (1.4) & 5.4 (0.7) & $10^{-2}$ & 0.002 \\ 
    $\mu_{suc}$ & $5$ & 9.3 (6.5) & 6.1 (1.6) & $10^{-3}$ & 0.003 \\ 
    $s_{pio}$ & $80$ & 108.4 (41.1) & 91.6 (26.2) & 1 & 0.07 \\ 
    $s_{suc}$ & $20$ & 31.6 (15.7) & 23.2 (4.7) & 1 & 0.003 \\
\hline
\end{tabular} 
\end{center}
\caption{Formind model: true parameters, means and RMSEs (in parentheses) of the estimates using the normal and the EES estimators. P-values for differences in log-absolute errors have been calculated using t-tests.}
\label{tab:MSE_eesa}
\end{table} 

Using the EES, rather than the normal approximation, leads to lower MSEs for all model parameters. The plots in Figure \ref{fig:forest_hist} compare the marginal distributions of the summary statistics, simulated from the model using the true parameter values, with those obtained by simulating random vectors from EES, f{\text{}}itted to the simulated statistics using the same values of $\gamma$ and $m$ used during the optimization. EES gives a good f{\text{}}it to the marginal distributions of the summary statistics, all of which are far from normal.

\begin{figure} 
\centering
\includegraphics[scale=0.6]{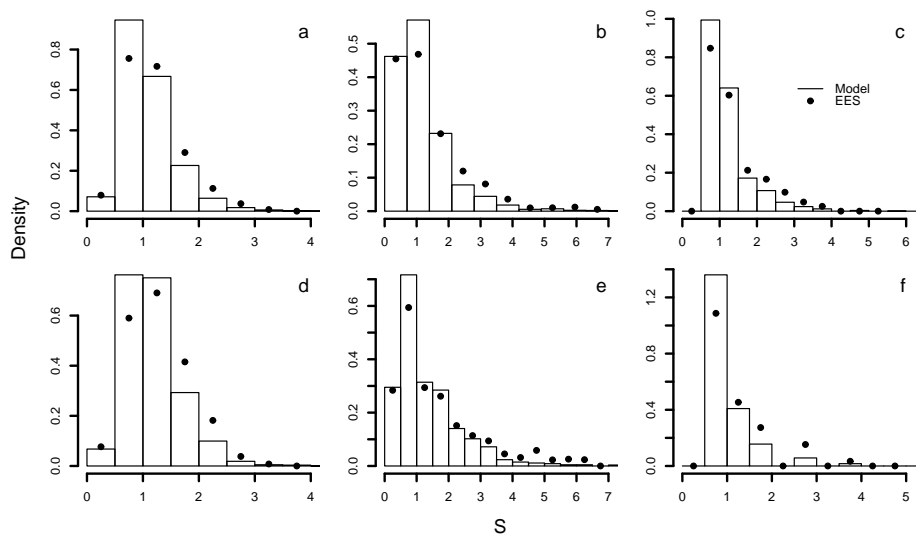}
\caption{Marginal distributions of summary statistics corresponding to small, medium and large pioneers (a, b, c) and successionals (d, e, f), in the Formind model.}
\label{fig:forest_hist}
\end{figure} 

\section{Conclusions} \label{sec:conclusions}

In this work we have relaxed the normality assumption, which characterized the original formulation of SL, by proposing a novel, more flexible, density estimator. As the examples show, EES scales well with the number of summary statistics used and it is able to model densities for which a normal approximation is clearly inadequate. This in turn can lead to better accuracy in parameter estimation.

Importantly, using EES rather than a Gaussian density, does not add much to the tuning requirements of SL. In fact, the only parameter of EES, $\gamma$, can be selected using standard statistical tools, such as cross-validation. In the context of SL, and of approximate methods in general, having little tuning requirements is an important feature, since it allows practitioners to focus on identifying informative summary statistics, rather than on other aspects of the inferential procedure. 

We have shown that, if fairly general conditions on the distribution of the summary statistics and on the underlying model hold, maximizing the synthetic likelihood function leads to consistent parameter estimators, under either EES or a Gaussian density estimator. Given the generality of the conditions assumed here, we have treated EES as a non-parametric density estimator, as suggested by \cite{feuerverger1989empirical}. However, the use of empirical saddlepoint approximations has previously been considered for particular classes of statistics, such as M-estimators \citep{monti1993relationship,ronchetti1994empirical} and L-statistics \citep{easton1986general}. Hence, it would be interesting to verify whether making additional assumptions on the summary statistics would allow us to  assess the asymptotic efficiency of the parameter estimates produced by the EES-based version of SL.


From a practical point of view, the computational eff{\text{}}iciency of SL is of critical importance.  \cite{gutmann2015bayesian}, \cite{wilkinson2014accelerating} and \cite{meeds2014gps} proposed using Gaussian Processes to increase the computational eff{\text{}}iciency of SL and ABC methods. The f{\text{}}irst two proposals, being based on pointwise likelihood estimates, could be used in conjunction with EES. \cite{meeds2014gps} model only the f{\text{}}irst two moments of the simulated statistics, hence it is not clear whether their approach could be modif{\text{}}ied to take higher moments into account, as the new density estimator does.

\section*{Acknowledgement}
MF and SNW have been partly funded by the EPSRC grants EP/I000917/1 and EP/K005251/1.

\bibliographystyle{chicago}
\bibliography{biblio.bib}

\begin{appendices}

\section{Proof of Proposition 1} \label{app:convex_proof}

Def{\text{}}ine
\begin{equation} \label{eq:pd_cond}
w_i = \frac{e^{ \bm \lambda^T  \bm s_i}}{\sum_{i=1}^me^{ \bm \lambda^T  \bm s_i}}, \;\;\;\;
\bar{ \bm s} = \hat{K}'( \bm \lambda) = \frac{\sum_{i = 1}^m w_i  \bm s_i}{\sum_{i = 1}^m w_i}, \;\;\;\; \text{i} = 1, \dots, m,
\end{equation}
and notice that $\hat{K}''( \bm \lambda)$ is positive semi-def{\text{}}inite
\begin{equation*}
\begin{array} {lcl} 
\bm z^T \hat{K}''( \bm \lambda)  \bm z &=&  \bm z^T \sum_{i = 1}^m w_i ( \bm s_i - \bar{ \bm s})( \bm s_i - \bar{ \bm s})^T  \bm z = \sum_{i = 1}^m w_i  \bm z^T ( \bm s_i - \bar{ \bm s})( \bm s_i - \bar{ \bm s})^T  \bm z \\ &=& \sum_{i = 1}^m w_i \big \{  \bm z^T ( \bm s_i - \bar{ \bm s}) \big \}^2 \geq 0, 
\end{array}
\end{equation*}
for all $\bm z \in \mathbb{R}^d$ such that $|| \bm z|| > 0$. In addition, def{\text{}}ine $ \bm q_i =  \bm s_i - \bar{ \bm s}$ and assume that
\begin{equation} \label{eq:rank_ass}
r = \text{rank} \, [ \bm q_1, \dots,  \bm q_m] = d.
\end{equation}
Then $\hat{K}''( \bm \lambda)$ is positive def{\text{}}inite and $\hat{K}( \bm \lambda)$ is strictly convex. In fact, suppose that there exists a non-zero vector $ \bm z$ such that $ \bm z^T \hat{K}''( \bm \lambda)  \bm z = 0$, which implies $\bm z^T \bm q_i = 0$ for $i = 1, \dots, m$. Given that $\bm z$ can be expressed as a linear combination of $\bm q_1, \dots, \bm q_m$, this would imply that
$$
 \bm z^T  \bm z =  (b_1  \bm q_1 + \dots + b_m  \bm q_m )^T  \bm z = 0,
$$
which contradicts the fact that $ \bm z$ is a non-zero vector. Now, def{\text{}}ine
$$
J \subset \big \{1, \dots, m \big \} \;\; \text{such that} \;\; 
 \bm \lambda^T  \bm s_{i} = \alpha > 0 \;\; \text{for all} \;\; i \in J, \;\;\;\;\; 
 \bm \lambda^T  \bm s_{i} < \alpha \;\; \text{for all} \;\; i \notin J,
$$
examination of (\ref{eq:pd_cond}) shows that
\begin{align*} 
\lim_{c \to \infty} w_i &= \frac{\lim_{c \to \infty} e^{c ( \bm \lambda^T  \bm s_i -  \bm \lambda^T  \bm s_j)}}{\lim_{c \to \infty} \sum_{k=1}^me^{c( \bm \lambda^T  \bm s_k -  \bm \lambda^T  \bm s_j)}} = \frac{0}{\text{Card}(J)} = 0, \;\;\; \text{for all} \;\; i, \, j \;\; \text{such that} \;\; j \in J, \; i \notin J, \\
\lim_{c \to \infty} w_i &= \frac{\lim_{c \to \infty} e^{c ( \bm \lambda^T  \bm s_i -  \bm \lambda^T  \bm s_j)}}{\lim_{c \to \infty} \sum_{k=1}^me^{c( \bm \lambda^T  \bm s_k -  \bm \lambda^T  \bm s_j)}} = \frac{1}{\text{Card}(J)}, \;\;\; \text{for all} \;\; i, \, j \;\; \text{such that} \;\; i,\, j \in J.
\end{align*}
Hence
$$
\lim_{c \to \infty} \bar{ \bm s} =  \lim_{c \to \infty} \hat{K}'(c  \bm \lambda) = \lim_{c \to \infty} \sum_{i = 1}^m w_i  \bm s_i = \frac{1}{\text{Card}(J)} \sum_{i \in J}^m  \bm s_i,
$$
and
$$
\lim_{c \to \infty}  \bm \lambda^T  \bm q_i = \lim_{c \to \infty}  \bm \lambda^T (  \bm s_i - \bar{ \bm s}) =  \bm \lambda^T \bigg \{  \bm s_i - \frac{1}{\text{Card}(J)} \sum_{i \in J}  \bm s_i \bigg \} = \alpha - \alpha = 0, \;\;\; \text{for all} \;\; i \in J.
$$
Finally, we choose $ \bm z =  \bm \lambda$ and obtain
$$
\lim_{c \to \infty}  \bm \lambda^T \hat{K}''(c  \bm \lambda)  \bm \lambda =   \sum_{i = 1}^m \lim_{c \to \infty} w_i \lim_{c \to \infty} \big (  \bm \lambda^T  \bm q_i \big )^2 = \frac{1}{\text{Card}(J)} \sum_{i \in J} \lim_{c \to \infty} \big (  \bm \lambda^T  \bm q_i \big )^2 = 0,
$$
which implies that $\hat{K}( \bm \lambda)$ is not strongly convex.

\section{Mean squared errors of the CGF estimators} \label{app:MSEcalc}

Firstly notice that, irrespective of the distribution of $ \bm S$, $\hat{M}( \bm \lambda)$ is unbiased. If $ \bm S$ is normally distributed, $e^{ \bm \lambda^T  \bm S}$ follows a log-normal distribution and
$$
M( \bm \lambda) = e^{ \bm \mu + \frac{1}{2} \bm \lambda ^ T  \bm \Sigma  \bm \lambda},
\;\;\;\;
\text{var} \big \{ \hat{M}( \bm \lambda)\big \} = \frac{1}{m} \big ( e^{ \bm \lambda ^ T  \bm \Sigma  \bm \lambda}  - 1 \big ) e^{2  \bm \mu +  \bm \lambda ^ T  \bm \Sigma  \bm \lambda},
$$
with the saddlepoint equation (\ref{eq:sadeq}) being solved by
\begin{equation} \label{eq:norm_sad_sol}
\hat{ \bm \lambda} = { \bm \Sigma}^{-1}( \bm s -  \bm \mu).
\end{equation}

In order to approximate the MSE of (\ref{eq:emp_cgf}) as a function of $ \bm \lambda$, we f{\text{}}irstly approximate its expected value by Taylor expansion around $M( \bm \lambda)$
\begin{equation*}
\begin{array} {lcl} 
\mathbb{E} \big \{\hat{K}( \bm \lambda) \big \}  & = &  \mathbb{E} \bigg [ \log{M( \bm \lambda)} + \cfrac{1}{M( \bm \lambda)}\big \{ \hat{M}( \bm \lambda) - M( \bm \lambda) \big \} - \cfrac{1}{2M( \bm \lambda)^2}\big \{ \hat{M}( \bm \lambda) - M( \bm \lambda) \big \}^2 + \cdots \bigg ] \\ & = &  \log{M( \bm \lambda)} - \cfrac{1}{2M( \bm \lambda)^2} \text{var}\big \{\hat{M}( \bm \lambda) \big \} + O(m^{-2}).
\end{array}
\end{equation*}
 Similarly we have that
\begin{equation*}
\begin{array} {lcl} 
\mathbb{E} \big \{ \hat{K}( \bm \lambda)^2 \big \}  & = &   \mathbb{E} \bigg [ \big \{ \log{M( \bm \lambda)} \big \}^2 + \cfrac{2 \log \{ M( \bm \lambda) \}}{M( \bm \lambda)}\big \{ \hat{M}( \bm \lambda) - M( \bm \lambda) \big \} \\& + & \bigg \{ \cfrac{1}{M( \bm \lambda)^2} - \cfrac{\log{M( \bm \lambda)}}{M( \bm \lambda)^2} \bigg \} \big \{ \hat{M}( \bm \lambda) - M( \bm \lambda) \big \} ^2 + \cdots \bigg ] \\ & = &   \big \{ \log{M( \bm \lambda)} \big \}^2 + \bigg\{\cfrac{1}{M( \bm \lambda)^2} -  \cfrac{\log{M( \bm \lambda)}}{M( \bm \lambda)^2} \bigg \}\text{var}\big \{\hat{M}( \bm \lambda) \big \} + O(m^{-2}),
\end{array}
\end{equation*}
hence
\begin{equation*}
\begin{array} {lcl} 
\text{var}\{\hat{K}( \bm \lambda)\}  & = &  \mathbb{E} \big \{\hat{K}( \bm \lambda)^2 \big \} - \mathbb{E} \big \{\hat{K}( \bm \lambda) \big \}^2 \\ & = &  \cfrac{1}{M( \bm \lambda)^2} \text{var}\big \{\hat{M}( \bm \lambda) \big \} - \cfrac{1}{4M( \bm \lambda)^4} \bigg [ \text{var} \big \{\hat{M}( \bm \lambda) \big \} \bigg ]^2  + O(m^{-2}).
\end{array}
\end{equation*}
Finally
\begin{equation} \label{eq: mseSad}
\begin{array} {lcl} 
\text{MSE}\{\hat{K}( \bm \lambda)\} & = &  \text{Bias}\{\hat{K}( \bm \lambda)\}^2 + \text{var}\{\hat{K}( \bm \lambda)\} 
\\& = & \cfrac{1}{M( \bm \lambda)^2} \text{var}\big \{\hat{M}( \bm \lambda) \big \} + O(m^{-2}) \\ & = & 
\cfrac{1}{m} \big ( e^{ \bm \lambda ^ T  \bm \Sigma  \bm \lambda}  - 1 \big ) + O(m^{-2}) 
\\ & = & \cfrac{1}{m} \big \{ e^{( \bm s -  \bm \mu) ^ T  \bm \Sigma ^{-1} ( \bm s -  \bm \mu)}  - 1 \big \} + O(m^{-2}),
\end{array}
\end{equation}
where the last equality holds due to (\ref{eq:norm_sad_sol}). The $O(m^{-2})$ term in (\ref{eq: mseSad}) derives from
\begin{equation*}
\begin{array} {lcl} 
\mathbb{E} \bigg [ \bigg \{ \hat{M}( \bm \lambda) - M( \bm \lambda) \bigg \}^3 \bigg ] & = & \mathbb{E} \bigg [ \bigg \{\frac{1}{m} \sum_{i=1}^m e^{ \bm \lambda^T  \bm S_i} - \mathbb{E}(e^{ \bm \lambda^T  \bm S}) \bigg \}^3 \bigg ] \\  & = & 
\cfrac{1}{m^3}   \sum_{i=1}^m \mathbb{E} \bigg [ \big \{ e^{ \bm \lambda^T  \bm S_i} - \mathbb{E}(e^{ \bm \lambda^T  \bm S}) \big \}^3 \bigg ] \\ & = & \cfrac{1}{m^2} \mu_3\big ({e^{ \bm \lambda^T  \bm S}}\big ),
\end{array}
\end{equation*}
where $\mu_3( X)$ is the third central moment of a random variable $ X$ and the second equality is justif{\text{}}ied by independence.

Estimator (\ref{eq:normalCGF}) is unbiased and consistent, if $ \bm S$ is normally distributed, hence
$$
\text{MSE}\{\hat{G}_m( \bm \lambda)\} = \text{var}\{\hat{G}_m( \bm \lambda)\} = { \bm \lambda^T}\text{var}(\hat{ \bm \mu}){ \bm \lambda} + \frac{1}{4} \text{var}\bigg ( { { \bm \lambda^T}\hat{ \bm \Sigma}{ \bm \lambda}} \bigg ), 
$$
due to the independence between $\hat{ \bm \mu}$ and $\hat{ \bm \Sigma}$ for normally distributed random variables (Basu's theorem). In addition, as $m$ goes to inf{\text{}}inity we have, from \cite{rencher2012methods}, that
$$
(m-1) \hat{ \bm \Sigma} = \sum_{i=1}^m ( \bm S_i - \hat{ \bm \mu})( \bm S_i - \hat{ \bm \mu})^T  \to  \bm W, \;\;\; \text{where} \;\;\; 
 \bm W \sim \text{Wishart}( \bm \Sigma, m-1),
$$
and from \cite{rao2009linear}
$$
 \bm \lambda^T  \bm W  \bm \lambda \sim \tau^2 Q, \;\; \text{where} \;\; 
\tau^2 =  \bm \lambda^T  \bm \Sigma  \bm \lambda \;\; \text{and} \;\; Q \sim \chi_{m-1}^2,
$$
hence, by using (\ref{eq:norm_sad_sol}), we obtain
\begin{equation*} \label{eq: mseNorm}
\begin{array} {lcl}
m\text{MSE}\{\hat{G}_m( \bm \lambda)\} & \to & {\hat{ \bm \lambda}^T} \bm \Sigma \hat{ \bm \lambda} + \cfrac{m}{2 (m-1)} ({\hat{ \bm \lambda}^T} \bm \Sigma \hat{ \bm \lambda})^2 \\ & \to & {\hat{ \bm \lambda}^T} \bm \Sigma \hat{ \bm \lambda} \big ( 1 + \cfrac{1}{2} {\hat{ \bm \lambda}^T} \bm \Sigma \hat{ \bm \lambda} \big ) \\ & = & {( \bm s -  \bm \mu)^T} \bm \Sigma^{-1} ( \bm s -  \bm \mu) \big \{ 1 + \cfrac{1}{2} ( \bm s -  \bm \mu)^T \bm \Sigma^{-1} ( \bm s -  \bm \mu) \big \}.
\end{array}
\end{equation*}

\section{Proof of Theorem \ref{theoGaus}} \label{app:idenGaus}

The Gaussian synthetic log-likelihood is proportional to
\begin{equation} \label{eq:synLogLikScaled}
n^{-\delta}\log{\hat{p}_{G}(\bm s^0|\bm \theta)}\propto-(\bm s^0-\hat{\bm \mu}_{\bm \theta}^{n})^{T}\,\big(n^{\delta}\hat{\bm \Sigma}_{\bm \theta}^{n}\big)^{-1}\,(\bm s^0-\hat{\bm \mu}_{\bm \theta}^{n})-n^{-\delta}\log\text{det}(\hat{\bm \Sigma}_{\bm \theta}^{n}).
\end{equation}
Assumption \ref{assConv} implies that
$$
n^{-\delta}\log\text{det}(\hat{\bm \Sigma}_{\bm \theta}^{n}n^{-\delta}n^{\delta})=n^{-\delta}\big\{\log\text{det}(\hat{\bm \Sigma}_{\bm \theta}^{n}n^{\delta})-d\delta\log n\big\}=O_p(n^{-\delta}) + O(n^{-\delta}\log n),
$$
so, as $n$ and $m \rightarrow \infty$, the r.h.s. of (\ref{eq:synLogLikScaled}) converges in probability to 
$$
f_{\bm \theta_0}(\bm \theta) = -(\bm \mu_{\bm \theta_{0}}-\bm \mu_{\bm \theta})^{T}\,\bm \Sigma_{\bm \theta}^{-1}\,(\bm \mu_{\bm \theta_{0}}-\bm \mu_{\bm \theta}),
$$
for any $\bm \theta$. Here $\bm \mu_{\bm \theta_{0}}$ is the asymptotic mean vector at true parameters $\bm \theta_{0}$. Then
$$
f_{\bm \theta_0}(\bm \theta) = -(\bm \mu_{\bm \theta_{0}}-\bm \mu_{\bm \theta})^{T}\,\bm {\bm U}_{\bm \theta} {\bm \Psi}_{\bm \theta} ^{-1} {\bm U}_{\bm \theta}^T\,(\bm \mu_{\bm \theta_{0}}-\bm \mu_{\bm \theta}) = - {\bm z}_{\bm \theta}^T {\bm \Psi}_{\bm \theta} ^{-1} {\bm z}_{\bm \theta},
$$
where $\bm {\bm U}_{\bm \theta} {\bm \Psi}_{\bm \theta} {\bm U}_{\bm \theta}^T$ is the eigen-decomposition of $\bm \Sigma_{\bm \theta}$, and we defined ${\bm z}_{\bm \theta}={\bm U}_{\bm \theta}^T\,(\bm \mu_{\bm \theta_{0}}-\bm \mu_{\bm \theta})$. Now, if $\bm \theta \neq \bm \theta_0$, then assumption \ref{assOne} assures that  $||\bm z_{\bm \theta}||_2=||\bm \mu_{\bm \theta_{0}}-\bm \mu_{\bm \theta}||_2>0$ which, together with assumption \ref{assMaxEig}, guarantees that
$$
f_{\bm \theta_0}(\bm \theta) = - \sum_{i=1}^d \frac{1}{(\bm \Psi_{\bm \theta})_{ii}} ({\bm z}_{\bm \theta})_i^2 \leq - \frac{1}{{}^*{\psi}} ||{\bm z}_{\bm \theta}||_2^2 < 0. 
$$
Given that $f_{\bm \theta_0}({\bm \theta_0})=0$, this function is maximized at $\bm \theta_0$, which implies identifiability under a Gaussian density estimator.
%

\section{Proof of Theorem \ref{theoConsGaus}} \label{app:consGaus}

Given assumption \ref{assDerivBound} and the fact that all the functions involved in $n^{-\delta}\log\hat{p}_{G}(\bm s^0|\bm \theta)$ are continuously differentiable, this function is continuously differentiable itself, due to the chain rule. We then have to show that its derivative is dominated by an $O_p(1)$ random sequence. Consider the partial derivative of the log-determinant w.r.t. the $k$-th parameter
$$
\bigg | \frac{\partial \log\text{det}(n^{\delta}\hat{\bm \Sigma}_{\bm \theta}^{n})}{\partial \theta_k} \bigg | = \bigg | \text{Tr} \bigg [ \big(n^{\delta}\hat{\bm \Sigma}_{\bm \theta}^{n}\big)^{-1}\frac{\partial n^{\delta} \hat{\bm \Sigma}_{\bm \theta}^{n}}{\partial \theta_k}   \bigg]\bigg | \leq  \text{Tr} \bigg [ \Big|\bm \hat{\bm U}_{\bm \theta}^n \Big| \Big| \big(\hat{\bm \Psi}^n_{\bm \theta}\big)^{-1} \Big| \Big| \bm (\hat{\bm U}_{\bm \theta}^n)^T \Big| \bigg | \frac{\partial n^{\delta} \hat{\bm \Sigma}_{\bm \theta}^{n}}{\partial \theta_k} \bigg |  \bigg],
$$
where $\bm \hat{\bm U}_{\bm \theta}^n \hat{\bm \Psi}^n_{\bm \theta} \bm (\hat{\bm U}_{\bm \theta}^n)^T$ is the eigen-decomposition of $n^{\delta}\hat{\bm \Sigma}_{\bm \theta}^{n}$. Then
\begin{equation}\label{eq:logDetDeriv}
\bigg | \frac{\partial \log\text{det}(n^{\delta}\hat{\bm \Sigma}_{\bm \theta}^{n})}{\partial \theta_k} \bigg | \leq \frac{1}{{}_*\hat{\psi}_{\bm \theta}^n} \text{Tr} \bigg [ \Big|\bm \hat{\bm U}_{\bm \theta}^n \Big|  \Big| \bm (\hat{\bm U}_{\bm \theta}^n)^T \Big| \bigg | \frac{\partial n^{\delta} \hat{\bm \Sigma}_{\bm \theta}^{n}}{\partial \theta_k} \bigg |  \bigg] \leq \frac{d}{{}_*\hat{\psi}_{\bm \theta}^n} \sum_{i=1}^d \sum_{j=1}^d \bigg | \frac{\partial n^{\delta} \hat{\bm \Sigma}_{\bm \theta}^{n}}{\partial \theta_k} \bigg |_{ij}.
\end{equation}
Now, consider the derivative of the inverse (scaled) covariance matrix
$$
\bigg|\bigg|\frac{\partial \big( n^{\delta} \hat{\bm \Sigma}_{\bm \theta}^{n} \big )^{-1}}{\partial \theta_k} \bigg|\bigg |_2 =  \bigg|\bigg|\big ( n^{\delta} \hat{\bm \Sigma}_{\bm \theta}^{n} \big )^{-1} \frac{\partial n^{\delta} \hat{\bm \Sigma}_{\bm \theta}^{n}}{\partial \theta_k} \big ( n^{\delta} \hat{\bm \Sigma}_{\bm \theta}^{n} \big )^{-1} \bigg|\bigg|_2 \leq \bigg|\bigg| \frac{\partial n^{\delta} \hat{\bm \Sigma}_{\bm \theta}^{n}}{\partial \theta_k} \bigg|\bigg|_2 \bigg|\bigg| \big ( n^{\delta} \hat{\bm \Sigma}_{\bm \theta}^{n} \big )^{-1} \bigg|\bigg|_2^2,
$$
but
$$
\Big|\Big| \big ( n^{\delta} \hat{\bm \Sigma}_{\bm \theta}^{n} \big )^{-1} \Big|\Big|_2 = \Big|\Big| \bm \hat{\bm U}_{\bm \theta}^n \Big|\Big|_2 \Big|\Big| \big(\hat{\bm \Psi}^n_{\bm \theta} \big)^{-1} \Big|\Big|_2 \Big|\Big| \bm (\hat{\bm U}_{\bm \theta}^n)^T \Big|\Big|_2 \leq \frac{1}{{}_*\hat{\psi}_{\bm \theta}^n}, 
$$
hence
\begin{equation} \label{eq:invDerivBound}
\bigg|\bigg|\frac{\partial \big( n^{\delta} \hat{\bm \Sigma}_{\bm \theta}^{n} \big )^{-1}}{\partial \theta_k} \bigg|\bigg |_2 \leq \big({}_*\hat{\psi}_{\bm \theta}^n\big)^{-2} \bigg|\bigg| \frac{\partial n^{\delta} \hat{\bm \Sigma}_{\bm \theta}^{n}}{\partial \theta_k} \bigg|\bigg|_2.
\end{equation}
Under assumptions \ref{assDerivBound} and \ref{assWellCond}, the r.h.s. of both (\ref{eq:logDetDeriv}) and (\ref{eq:invDerivBound}) are dominated. So if we consider
\begin{equation} \label{eq:derivGausLL}
\begin{array} {lcl}
\bigg| \cfrac{\partial n^{-\delta}\log\hat{p}_{G}(\bm s^0|\bm \theta)}{\partial \theta_k} \bigg | &\leq& \bigg|\bigg|\cfrac{\partial \hat{\bm \mu}_{\bm \theta}^{n}}{\partial \theta_k}\bigg|\bigg|_2\big|\big|n^{\delta} \hat{\bm \Sigma}_{\bm \theta}^{n}\big|\big|_2||\bm s_{0}-\hat{\bm \mu}^n_{\bm \theta}||_2 +  \frac{1}{2} \bigg|\bigg| \cfrac{\partial \big( n^{\delta} \hat{\bm \Sigma}_{\bm \theta}^{n} \big )^{-1}}{\partial \theta_k} \bigg|\bigg|_2 ||\bm s_{0}-\hat{\bm \mu}^n_{\bm \theta}||_2^2 \\  
&+& \frac{1}{2}\bigg|\cfrac{\partial \log\text{det}(n^{\delta}\hat{\bm \Sigma}_{\bm \theta}^{n})}{\partial \theta_k}\bigg|,
\end{array}
\end{equation}
it is clear that the r.h.s. of (\ref{eq:derivGausLL}) is dominated as well, provided that $||\bm s_0 - \hat{\bm \mu}^n_{\bm \theta}||_2$ is. Now
$$
||\bm s_0 - \hat{\bm \mu}^n_{\bm \theta}||_2^2 = ||\bm z_{\bm \theta_0}^n + {\bm \mu}^n_{\bm \theta_0} - \hat{\bm \mu}^n_{\bm \theta_0} + \hat{\bm \mu}^n_{\bm \theta_0} - \hat{\bm \mu}^n_{\bm \theta}||_2^2 \leq ||\bm z_{\bm \theta_0}^n||_2^2 + ||{\bm \mu}^n_{\bm \theta_0} - \hat{\bm \mu}^n_{\bm \theta_0}||_2^2 + ||\hat{\bm \mu}^n_{\bm \theta_0} - \hat{\bm \mu}^n_{\bm \theta}||_2^2,
$$
where $\bm z_{\bm \theta_0}^n = \bm s_0 - \bm \mu^n_{\bm \theta_0}$. But $||\hat{\bm \mu}^n_{\bm \theta_0} - \hat{\bm \mu}^n_{\bm \theta}||_2^2$ is dominated, because the derivatives of $\hat{\bm \mu}^n_{\bm \theta}$ are dominated by assumption \ref{assDerivBound} and the parameter space is compact by assumption \ref{assComp}. In addition, for any $\epsilon>0$, by Markov's inequality
\begin{equation} \label{eq:markov}
\text{Prob}\big(||\bm z_{\bm \theta_0}^n||_2^2>\epsilon\big) \leq \frac{\mathbb{E}\big(||\bm z_{\bm \theta_0}^n||_2^2\big)}{\epsilon} = n^{-\delta} \frac{\text{Tr}\big(n^{\delta} {\bm \Sigma}_{\bm \theta_0}^{n}\big)}{\epsilon} \leq  \frac{n^{-\delta}\, d\, {}^*{\psi}_{\bm \theta_0}^n }{\epsilon},
\end{equation}
where ${}^*{\psi}_{\bm \theta_0}^n$ is the largest eigenvalue of $n^{\delta} {\bm \Sigma}_{\bm \theta_0}^{n}$. The r.h.s. of (\ref{eq:markov}) is $O(n^{-\delta})$ by assumptions \ref{assConv} and \ref{assWellCond}, hence $||\bm z_{\bm \theta_0}^n||_2^2$ is $o_p(1)$.  An identical argument shows that $||{\bm \mu}^n_{\bm \theta_0} - \hat{\bm \mu}^n_{\bm \theta_0}||_2^2$ is $o_p(1)$. This proves that the derivatives of $n^{-\delta}\log\hat{p}_{G}(\bm s^0|\bm \theta)$ are dominated by an $O_p(1)$ sequence.

As explained in Section \ref{sec:withinSL}, this implies the uniform convergence of $n^{-\delta}\log\hat{p}_{G}(\bm s^0|\bm \theta)$ to $f_{\bm \theta_0}(\bm \theta)$, provided that $f_{\bm \theta_0}(\bm \theta)$ is equicontinuous. It is easy to show to that, if the derivatives of $f_{\bm \theta_0}(\bm \theta)$ are bounded, then equicontinuity follows. But, under assumptions \ref{assMaxEig} and \ref{assEquicont}, it is possible to bound the derivatives of $f_{\bm \theta_0}(\bm \theta)$, as we have just done for $n^{-\delta}\log\hat{p}_{G}(\bm s^0|\bm \theta)$. This assures uniform convergence which, together with identifiability, guarantees weak consistency \citep{van2000asymptotic}.

\section{Proof of Theorem \ref{theoSaddle}} \label{app:consSaddle}

Taylor expanding the un-normalized EES-based synthetic log-likelihood leads to
\begin{equation} \label{eq:limTail}
\log \hat{p}_S(\bm s^0|\bm \theta)=\log \hat{p}_G(\bm s^0|\bm \theta)+O\big\{ e^{-\hat{\gamma}_{\bm \theta}^{n}\,(\bm s^0-\hat{\bm \mu}_{\bm \theta}^{n})^{T}\,\big(\hat{\bm \Sigma}_{\bm \theta}^{n}\big)^{-1}\,(\bm s^0-\hat{\bm \mu}_{\bm \theta}^{n})}\big\},
\end{equation}
and, by multiplying both sides by $n^{-\delta}$, we obtain
\[
n^{-\delta}\log \hat{p}_S(\bm s^0|\bm \theta)=n^{-\delta}\log \hat{p}_G(\bm s^0|\bm \theta)+O\big\{ n^{-\delta}e^{-\hat{\gamma}_{\bm \theta}^{n}n^{\delta}\,(\bm s^0-\hat{\bm \mu}_{\bm \theta}^{n})^{T}\,\big(n^{\delta}\hat{\bm \Sigma}_{\bm \theta}^{n}\big)^{-1}\,(\bm s^0-\hat{\bm \mu}_{\bm \theta}^{n})}\big\},
\]
and, as $m$ and $n \rightarrow \infty$,  assumptions \ref{assConv}, \ref{assMaxEig} and \ref{assGamma} imply
\begin{equation} \label{eq:convNotNorm}
n^{-\delta}\log \hat{p}_S(\bm s^0|\bm \theta)\rightarrow n^{-\delta}\log \hat{p}_G(\bm s^0|\bm \theta),
\end{equation}
in probability, for any $\bm \theta$. Identifiability then follows from theorem \ref{theoGaus}. 


\end{appendices}

\pagebreak
\clearpage
\begin{center}
\textbf{\large Supplementary material to ``An Extended Empirical Saddlepoint Approximation for Intractable Likelihoods''\\}
\vspace{2mm}
Matteo Fasiolo, Simon N. Wood, Florian Hartig and Mark V. Bravington
\end{center}
\setcounter{section}{0}
\setcounter{equation}{0}
\setcounter{figure}{0}
\setcounter{table}{0}
\setcounter{page}{1}
\makeatletter
\renewcommand{\theequation}{S\arabic{equation}}
\renewcommand{\thefigure}{S\arabic{figure}}
\renewcommand{\bibnumfmt}[1]{[S#1]}
\renewcommand{\citenumfont}[1]{S#1}

\section{Asymptotics of the multivariate empirical saddlepoint approximation} \label{ref: asymEmpSad}

Here we follow \cite{feuerverger1989empirical} but develop the results in a multivariate setting, and with some changes in notation. For $ \bm \lambda \in I$, $\hat{M}_m( \bm \lambda)$ converges to $M( \bm \lambda)$ almost surely. This convergence is uniform and extends to $\hat{K}_m( \bm \lambda)$:
\begin{equation} \label{eq:Mconv}
\underset{ \bm \lambda \in I}{\text{sup}} \; | \hat{M}_m( \bm \lambda) - M( \bm \lambda) | \to 0,
\end{equation}
\begin{equation} \label{eq:Kconv}
\underset{ \bm \lambda \in I}{\text{sup}} \; | \hat{K}_m( \bm \lambda) - K( \bm \lambda) | \to 0.
\end{equation}

\underline{Proof}: Due to the Strong Law of Large Numbers $\hat{M}_m( \bm \lambda)$ converges to $M( \bm \lambda)$ almost surely, for all $ \bm \lambda$ in any countable collection $\{ \bm \lambda_i\}$. In addition $\hat{M}_m( \bm \lambda)$ and $M( \bm \lambda)$ are both convex functions and, for such functions, convergence on dense subsets implies uniform convergence on compact subsets \citep{robertsconvex}. This proves (\ref{eq:Mconv}), while (\ref{eq:Kconv}) follows by continuity of the logarithm. 

For $ \bm \lambda$ in the interior of $I$, these results extend to derivatives of both $\hat{M}_m( \bm \lambda)$ and $\hat{K}_m( \bm \lambda)$:
\begin{equation} \label{eq:M_deriv_conv}
\underset{ \bm \lambda \, \in \, \text{int}(I)}{\text{sup}} \; | D^{ i} \hat{M}_m( \bm \lambda) - D^{ i} M( \bm \lambda) | \to 0,
\end{equation}
\begin{equation} \label{eq:K_deriv_conv}
\underset{ \bm \lambda \, \in \, \text{int}(I)}{\text{sup}} \; | D^{ i} \hat{K}_m( \bm \lambda) - D^{ i} K( \bm \lambda) | \to 0,
\end{equation}
where $ i = \big \{i_1, \dots, i_d \big \}$ and:
$$
D^{ i} M( \bm \lambda) = \frac{\partial^k M( \bm \lambda)}
{\partial  \lambda_1^{i_1} \cdots \partial \lambda_d^{i_d}}, \;\;\;\;\; \text{with} \;\;\;\;
\sum_{z = 1}^K i_z = k \in {N}.
$$
\underline{Proof}: $D^{ i} M( \bm \lambda)$ is f{\text{}}inite only for $ \bm \lambda \in \text{int}(I)$. If all the elements of $ i$ are even, then $D^{ i} \hat{M}_m( \bm \lambda)$ and $D^{ i} M( \bm \lambda)$ are convex and (\ref{eq:M_deriv_conv}) follows as before. 
Otherwise, indicate with $ \bm \lambda^o$ the elements of $ \bm \lambda$ for which the corresponding element of $ i$ is odd. If there is an even number of components of $ \bm \lambda^o$ which are negative, $D^{ i} M( \bm \lambda)$ is still convex, otherwise $-D^{ i} M( \bm \lambda)$ is. Applying the uniform convergence argument for convex functions to the two sub-cases proves (\ref{eq:M_deriv_conv}). In addition, $D^{ i} K( \bm \lambda)$ has the form $P( \bm \lambda) /  M( \bm \lambda)  ^{2^k}$ 
with $P( \bm \lambda)$ being a polynomial function of $D^{ l} K( \bm \lambda)$, where $ l$ belongs to the set of all $d$-dimensional vector such that:
$$
l_j \in {N},  \;\;\;\; \sum_{j=1}^d{l_j} \leq k \;\;\;\; \text{for} \;\;\; j = 1, \dots, d.
$$
Given that an analogous argument holds for $D^{ i} \hat{K}_m( \bm \lambda)$, (\ref{eq:K_deriv_conv}) is proved by continuity.

After noticing that $\hat{M}_m( \bm \lambda)$ and its derivatives are unbiased estimators of $M( \bm \lambda)$ and its corresponding derivatives, it is straightforward to show that:
$$
m \,\text{Cov}\, \big \{ D^{ i} \hat{M}_m( \bm \lambda_1), D^{ j} \hat{M}_m( \bm \lambda_2)\big \} = 
D^{ i +  j} M( \bm \lambda_1+  \bm \lambda_2) - D^{ i} M( \bm \lambda_1) D^{ j} M( \bm \lambda_2),
$$
for $ \bm \lambda_1$, $ \bm \lambda_2$ such that $ \bm \lambda_1 +  \bm \lambda_2 \in I$.
This entails that, if we def{\text{}}ine $I/2$ to be the subset of $I$ such that $ \bm \lambda \in I/2$ if $2  \bm \lambda \in I$, than $\hat{M}_m( \bm \lambda)$ is a $\sqrt{m}$-consistent estimator of $M( \bm \lambda)$, for $ \bm \lambda \in I/2$. 
An analogous, but asymptotic, result for $\hat{K}_m( \bm \lambda)$ is the following:
$$
m \, \text{Cov}\, \big \{ D^{ i} \hat{K}_m( \bm \lambda_1), D^{ j} \hat{K}_m( \bm \lambda_2)\big \} \to D^{ i +  j} \bigg \{ \frac{ M( \bm \lambda_1 +  \bm \lambda_2)}{M( \bm \lambda_1)M( \bm \lambda_2)} - 1 \bigg \},
$$
where $ \bm \lambda_1$ and $ \bm \lambda_2$ are further restricted to the interior of $I/2$ if any of the elements of $ i$ or $ j$ is greater than zero. Finally, after noticing that on $I/2$:
$$
\hat{ \bm \lambda} = \hat{K'}^{-1}( \bm x) =  \bm \lambda + O(m^{-\frac{1}{2}}),
$$
we have that:
\begin{equation*}
\begin{array} {lcl} 
\frac{\hat{p}_m( \bm s)}{\hat{p}( \bm s)} & = & \frac{\text{det}\{K''( \bm \lambda)\}}{ \text{det}\{\hat{K}_m''(\hat{ \bm \lambda})\}} 
\text{exp}\Bigg [ \big \{ \hat{K}_m(\hat{ \bm \lambda}) - \hat{ \bm \lambda}^T\hat{K}'_m(\hat{ \bm \lambda}) \big \} - \big \{ K( \bm \lambda) -  \bm \lambda^T K'( \bm \lambda) \big \}     \Bigg ] \\ & = &  \frac{\text{det}\{K''( \bm \lambda)\}}{ \text{det}\{K''( \bm \lambda)\} + O(m^{-\frac{1}{2}}) }  \text{exp} \big \{ O(m^{-1/2}) \big \} \\ &=& 1 + O(m^{-\frac{1}{2}}),
\end{array}
\end{equation*}
by Taylor expansions, which are justif{\text{}}ied by the differentiability of all the functions involved. See \cite{feuerverger1989empirical} for more details.

\section{Optimality of the cross-validated Extended Empirical Saddlepoint} \label{app: KLoptim}

Let $p(\bm s|\bm \theta)$ be the true density of the statistics and
$\hat{p}_{S}(\bm s|\bm \theta, \gamma)$ be the EES density. Assume that we have a training set
of size $m$, a test set of size $n_{T}$ and that we have used $l$ simulations to normalize the density estimator. In this section we prove that, as $m$, $n_T$ and $l \rightarrow \infty$, Algorithm \ref{alCross} consistently selects the value of $\gamma$ which minimizes the Kullback-Leibler divergence between $\hat{p}_{S}(\bm s|\bm \theta, \gamma)$ and $p(\bm s|\bm \theta)$.  When two folds are
used, cross-validation (Algorithm \ref{alCross}) selects $\gamma$ as follows
$$
\hat{\gamma}=\text{\ensuremath{\underset{\gamma}{\text{argmin}}}}\bigg \{ -\frac{1}{n_{T}}\sum_{i=1}^{n_{T}}\log\hat{p}_{S}(\bm s_{i}|\bm \theta, \gamma) \bigg \} \;\;\;\;\text{with}\;\;\;\; \bm s_{i}\sim p(\bm s|\bm \theta),
$$
but the Weak Law of Large Numbers implies that
$$
\begin{array} {lcl} 
\underset{m, l , n_{T}\rightarrow\infty}{\text{plim}} -\frac{1}{n_{T}}\sum_{i=1}^{n_{T}}\log\hat{p}_{S}(\bm s_{i}|\bm \theta, \gamma)  & = & -\int \log p_{S}(\bm s|\bm \theta, \gamma)p(\bm s|\bm \theta)ds \\
& \propto & \int\big\{\log p(\bm s|\bm \theta)p(\bm s|\bm \theta)-\log p_{S}(\bm s|\bm \theta, \gamma)p(\bm s|\bm \theta)\big\} ds \\
& = & \int\log\frac{p(\bm s|\bm \theta)}{p_{S}(\bm s|\bm \theta, \gamma)}p(\bm s|\bm \theta)ds \\
& = & \text{KL}\bigg\{ p_{S}(\bm s|\bm \theta, \gamma),p(\bm s|\bm \theta)\bigg\}.
\end{array}
$$
Hence $p_{S}(\bm s|\bm \theta, \hat{\gamma})$ is the member of the $p_{S}(\bm s|\bm \theta, \gamma)$
family with minimal Kullback-Leibler distance from $p(\bm s|\bm \theta)$.
This result can easily be extended to $k$-fold cross-validation ($k > 2$).

\section{Practical implementation} \label{app: practicalImpl}

\subsection{Saddlepoint version of Algorithm \ref{alPoint}} \label{app: saddleAlg1}

In this section we illustrate how a pointwise synthetic likelihood estimate can be obtained using the new density estimator, rather than a Gaussian density.
\begin{algorithm} \label{alSaddle}
\caption{Estimating $p_{SL}({ \bm s}^0|{ \bm \theta})$ using the Extended Empirical Saddlepoint approximation}
\begin{algorithmic}[1]
\STATE Simulate datasets ${ \bm Y}_i, \dots, {\bm Y}_m$ from the model $p({ \bm Y}| \bm \theta)$.
\STATE Transform each dataset ${ \bm Y}_i$ to a vector of summary statistics $ \bm S_i = S({ \bm Y}_i)$.
\STATE Calculate sample mean $\hat{ \bm \mu}_{ \bm \theta}$ and covariance $\hat{ \bm \Sigma}_{ \bm \theta}$ of the simulated statistics. 
\STATE Estimate the synthetic likelihood
\begin{equation*} 
\hat{p}_{SL}({ \bm s}^0|{ \bm \theta}) = \hat{p}_m( \bm s^0, \gamma) = \frac{1}{  (2 \pi)^{\frac{d}{2}} \, \text{det}\{\hat{K}_m''( \hat{\bm \lambda}_m, \gamma, \bm s^0) \}^{\frac{1}{2}} } 
e^{\hat{K}_m(\hat{ \bm \lambda}_m, \gamma, \bm s^0) - \hat{ \bm \lambda}_m^T  \bm s^0},
\end{equation*} 
where $\hat{ \bm \lambda}_m$ is the solution of the empirical saddlepoint equation
\begin{equation*}
\hat{K}'_m(\hat{ \bm \lambda}_m, \gamma, \bm s^0) = \bm s^0,
\end{equation*}
while $\hat{K}_m(\bm \lambda, \gamma, \bm s)$ is given by equation (\ref{eq:modCGF}) in the main text.

\STATE  Optionally, normalize $\hat{p}_{SL}({ \bm s}^0|{ \bm \theta})$ by importance sampling
$$
\bar{p}_{SL}({ \bm s}^0|{ \bm \theta}) = 
\frac{\hat{p}_m( \bm s^0, \gamma)}{\hat{z}_m(\gamma)},
$$
where
$$
\hat{z}_m(\gamma) = \frac{1}{l} \sum_{i = 1}^l \frac{\hat{p}_m( \bm S_i, \gamma)}{q( \bm S_i)}, 
\;\;\;\;\;
\bm S_i \sim q( \bm s), \;\;\; \text{for} \;\; i = 1,\dots,l. 
$$
A reasonably eff{\text{}}icient importance density $q( \bm s)$ is a Gaussian density with mean vector $\hat{ \bm \mu}_{ \bm \theta}$ and covariance $\hat{ \bm \Sigma}_{ \bm \theta}$.
\end{algorithmic}
\end{algorithm}

\subsection{Maximizing the synthetic likelihood} \label{app: optimDetails}

To maximize the synthetic likelihood we have used a special case of the Iterated Filtering procedure, f{\text{}}irstly proposed by \cite{ionides2006inference}. Very brief{\text{}}ly, suppose that $\hat{ \bm \theta}_{k}$ is the estimate of the unknown parameters at the $k$-th step of the optimization routine. This estimate is updated as follows:
\begin{enumerate}
\item Simulate $N$ parameter vectors $ \bm \theta_1, \dots, \bm \theta_N$ from a user-def{\text{}}ined density $p( \bm \theta_{k+1}|\hat{ \bm \theta}_k)$ such that
\begin{equation} \label{kerCond}
\mathbb{E}( \bm \theta_{k+1}|\hat{ \bm \theta}_k) = \hat{ \bm \theta}_k, \;\;\; 
\text{var}( \bm \theta_{k+1}|\hat{ \bm \theta}_k) = \sigma^2_k  \bm \Sigma
\;\;\; \text{and} \;\;\; \mathbb{E}(|| \bm \theta_{k+1} - \hat{ \bm \theta}_k||^{3/2}) = o(\sigma_k^2),
\end{equation}
where $\sigma^2_k$ is a cooling schedule and $ \bm \Sigma$ is a covariance matrix.
\item For each $ \bm \theta_i$, obtain an estimate $\hat{p}_{SL}( \bm s^0| \bm \theta_i)$ of the synthetic likelihood, using either the multivariate normal density or the normalized EES.
\item Update the estimate
$$
\hat{ \bm \theta}_{k+1} = \frac{\sum_{i = 1}^N  \bm \theta_i \hat{p}_{SL}( \bm s^0| \bm \theta_i)}{
                               \sum_{i = 1}^N \hat{p}_{SL}( \bm s^0| \bm \theta_i)}.
$$
\end{enumerate}  
The convergence properties of this procedure have been studied, in the context of Hidden Markov Models, f{\text{}}irstly by \cite{ionides2006inference} and more in details by \cite{ionides2011iterated}. \cite{doucet2013derivative} explicitly pointed out that it can be used as a general likelihood optimizer. While those papers considered situations where the likelihood ($p_{SL}( \bm s^0| \bm \theta)$ in our context) can be evaluated exactly, we have verif{\text{}}ied empirically that the algorithm works well also when the likelihood is estimated with Monte Carlo error.
For all the examples we used the following cooling schedule
$$
\sigma_k^2 = \sigma_0^{2k}, \;\;\;\; \sigma_0^2 = 0.95. 
$$
In the shifted exponential example we performed 4 separate runs of the optimizer, using either the normal or the EES approximation, in both the 10 and 20-dimensional setting.

\subsection{Shifted exponential details} \label{sec: expDetails}

In one dimension, the ABC likelihood is
$$
p_{\epsilon}(s^0|\theta)=p(|s-s^0|<\epsilon|\theta)=\int_{0}^{\infty}I(|s-s^0|<\epsilon)p(s|\theta)ds,
$$
Without loss of generality, choose $s^0=0$. If $-\epsilon\leq\theta\leq\epsilon$
we have
$$
p_{\epsilon}(s^0|\theta)=\int_{\theta}^{\epsilon}p(s|\theta)ds=\int_{\theta}^{\epsilon}\beta e^{-\beta(s-\theta)}ds=\int_{0}^{\epsilon-\theta}\beta e^{-\beta x}dx=1-e^{-\beta(\epsilon-\theta)}=F(\epsilon-\theta),
$$
where we used the change of variable $x = s-\theta$.  Similarly, if $\theta<-\epsilon$, the likelihood is
$$
p_{\epsilon}(s^0|\theta)=\int_{-\epsilon}^{\epsilon}p(s|\theta)ds=F(\epsilon-\theta)-F(-\epsilon-\theta)=e^{-\beta(-\epsilon-\theta)}-e^{-\beta(\epsilon-\theta)}.
$$
Finally, $p_{\epsilon}(s^0|\theta)=0$ for $\theta>\epsilon$. Under a uniform prior on $[\psi, 0]$, where $\psi < -\epsilon$, the MAP is 
$$
\hat{\theta}=\underset{\theta}{\text{argmax}}\;p_{\epsilon}(s^0|\theta)=-\epsilon,
$$
with $p_{\epsilon}(s^0|\hat{\theta})=F(2\epsilon)=1-e^{-2\beta\epsilon}$.
In $d$ dimensions, the likelihood is
$$
p_{\epsilon}({\bm s}^0|\bm \theta) = \prod_{k=1}^d p_{\epsilon}(s^0_k|\theta_k),
$$
due to the independence between the summary statistics. Hence, the likelihood at the MLE is $F(2\epsilon)^{d}$, which is also the maximal probability that a simulated statistics vector gets accepted. 

In ABC the tolerance is often chosen so that a fraction $\alpha \in (0, 1)$ of the
statistics simulated from the prior falls within the tolerance. In one dimension and for fixed $\psi$, the overall probability of acceptance during this process is
$$
\begin{array}{lcl}
p(|S|<\epsilon|\psi) & = &  \int_{\psi}^{0}p(-\epsilon<S<\epsilon|\theta)\frac{1}{-\psi}d\theta \\
& = & -\frac{1}{\psi}\bigg\{\int_{\psi}^{-\epsilon}\big[e^{-\beta(-\epsilon-\theta)}-e^{-\beta(\epsilon-\theta)}\big]d\theta+\int_{-\epsilon}^{0}(1-e^{-\beta(\epsilon-\theta)})d\theta\bigg\} \\
& = & -\frac{1}{\psi}\bigg\{\frac{1}{\beta}\bigg[1+e^{\beta\psi}\bigg(e^{-\beta\epsilon}-e^{\beta\epsilon}\bigg)-e^{-\beta\epsilon}\bigg]+\epsilon\bigg\}.
\end{array}
$$
Now, to select $\psi$ so that we obtain an acceptance probability
equal to $\phi$, we need to solve 
$$
p(|S|<\epsilon|\psi) = \phi,
$$
wrt $\psi$, numerically (e.g. using bisection). Due to the independence between the priors and between the statistics, in $d$ dimensions the above probability becomes $p(|S|<\epsilon|\psi)^d$.

\subsection{Unstable population model details} \label{app: crashDetails}

Under the Gaussian and EES version of SL, we maximize the synthetic likelihood using 100 iterations of Iterated Filtering, with $N = 24$ synthetic likelihood evaluations at each step. The optimizer is initialized at $r=0.3$, $\kappa=30$, $\alpha=0.15$ and $\beta=0.03$. Given that $m = 5 \times 10^3$, estimating the model parameters costs $12 \times 10^6$ simulations from the model. In ABC we use $10^6$ simulation to calibrate the tolerance $\epsilon$, followed by $12 \times 10^6$ MCMC samples. Of these, we store only a thinned sub-sample of $22 \times 10^3$ parameter vector. Before obtaining MAP estimates, we discard the first $4 \times 10^3$ of these as burn-in. Then we maximize the posterior using the mean shift algorithm, where the approximate posterior is estimated using a Gaussian kernel density  estimator. Following the rule of thumb of \cite{silverman1986density} the covariance matrix, $\bf H$, of the kernels is determined as follows
$$
{\bf H}^{\frac{1}{2}} = \Big ( \frac{4}{d+2} \Big )^\frac{1}{d+4} n^{-\frac{1}{d+4}} \hat{\bm\Sigma}^{1/2},
$$
where $n=18 \times 10^3$, while ${\bf H}^{\frac{1}{2}}$ and ${\bm\Sigma}^{1/2}$ are matrix square roots of ${\bf H}$ and of $\hat{\bm\Sigma}$, the estimated covariance of the posterior samples. Given that the kernel density estimate of the posterior might have multiple local modes, we obtain 500 different MAP estimates by initializing the mean shift algorithm at a random posterior sample. We used the estimate corresponding to the highest (estimate) posterior density as our final MAP estimate.

\subsection{Formind settings} \label{app: formDetails}
The summary statistic were obtained using the following constants
$$
\alpha_{1, 1} = \alpha_{1, 3} = \alpha_{2, 1} = \alpha_{2, 3} = 1.5, \; \alpha_{1, 2} = 2, \alpha_{2, 2} = 2 \\
$$
while $\psi_{jk}$ and $\sigma_{jk}$ were estimates of mean and standard deviations of $C_{jk}$, obtained by simulating tree counts at the true parameters. The 24 datasets were simulated from Formind using the same parameter values as in Table 1 in the supplementary material of  \cite{hartig2014technical}. The chosen tree classes correspond to diameters at breast height $d < 0.2m$, $0.2m \leq d < 0.6m$, $d \geq 0.6m$ for pioneer and $d < 0.5m$, $0.5m \leq d < 1.4m$, $d \geq 1.4m$ for late successional trees. To generate the datasets the model was run for $10^5$ years, and the f{\text{}}inal statistics vector was selected. The $m = 10^4$ summary statistics simulated to estimate $p_{SL}( \bm s^0| \bm \theta)$ have been generated by simulating the model for $5.1 \times 10^4$ years, where the f{\text{}}irst $10^3$ years of simulation were discarded to avoid the transient, and by storing a vector of statistics every 5 years. 


Starting from initial values  $\mu_{pio} = 0.03$, $\mu_{suc} = 0.003$, $s_{pio} = 120$ and $s_{suc} = 40$, we ran the optimizations using $N=24$ and 100 iterations. The estimates reported in Table 1 in the main text were obtained by using the averages of the last 10 iterations of each optimization run as point estimates. The whole experiment took around 10 days on a quad-core Intel i7 3.6 GHz processor.

\section{Example: correlated multivariate shifted exponential distribution}

In the shifted exponential example included in the main text, the elements of the random vector $\bm S$ are independent. To show that EES can cope with correlated random variables we have introduced correlations, without altering the marginal densities, by using a copula model. In particular, we used a Gaussian $d$-dimensional copula, which has density
$$
c(u_1, \dots, u_d|\bm R) = \text{det}(\bm \Sigma)^{-\frac{1}{2}} \exp \bigg\{ \frac{1}{2}\bm q^T (\bm I_d - \bm R^{-1}) \bm q \bigg \},
$$
where $\bm R$ is a $d \times d$ correlation matrix, $\bm I_d$ is the identity matrix, $\bm q$ is a $d$-dimensional vector with $q_i=\Phi^{-1}(u^i)$, where $\Phi$ is the Cumulative Distribution Function of a standard normal. The random vector $\{ u_1, \dots, u_d \}$ has marginals that are uniformly distributed on $[0, 1]$. For an introduction to copulas see \cite{cherubini2004copula}.

To simulate unstructured, dense correlation matrices $\bm R$, we have used the method proposed by \cite{joe2006generating}. To set up the copula model and to simulate random variable, we have used the tools described by \cite{yan2007enjoy}.

We compare EES with a Gaussian and a kernel density estimator. In particular, we used $m = 10^3$ training samples and $5 \times 10^3$ test samples. The normalizing constant of the saddlepoint was estimated using $l = 10^3$ simulations and $\gamma$ was estimated by cross-validation. For the kernel estimator we used a multivariate Gaussian kernel with covariance $\alpha \hat{\bm \Sigma}$, where $\hat{\bm \Sigma}$ is the empirical covariance matrix of the random vectors in training set and $\alpha$ is a scaling parameter, whose value was selected by cross-validation. Figure \ref{fig:kern_dens} shows how the estimated Kullback-Leibler divergence, between the true density and each density estimate, varies with the number of dimensions. The true density is very skewed in each dimension, hence the Gaussian estimator is highly biased. The kernel estimator does better than the Gaussian, even as the dimensionality increases. This is attributable to the fact that having a single bandwidth $\alpha$ is very helpful in this example, because all the marginal densities are identical. The new density estimator performs uniformly better than the alternatives. 

\begin{figure} 
\centering
\includegraphics[scale = 0.51]{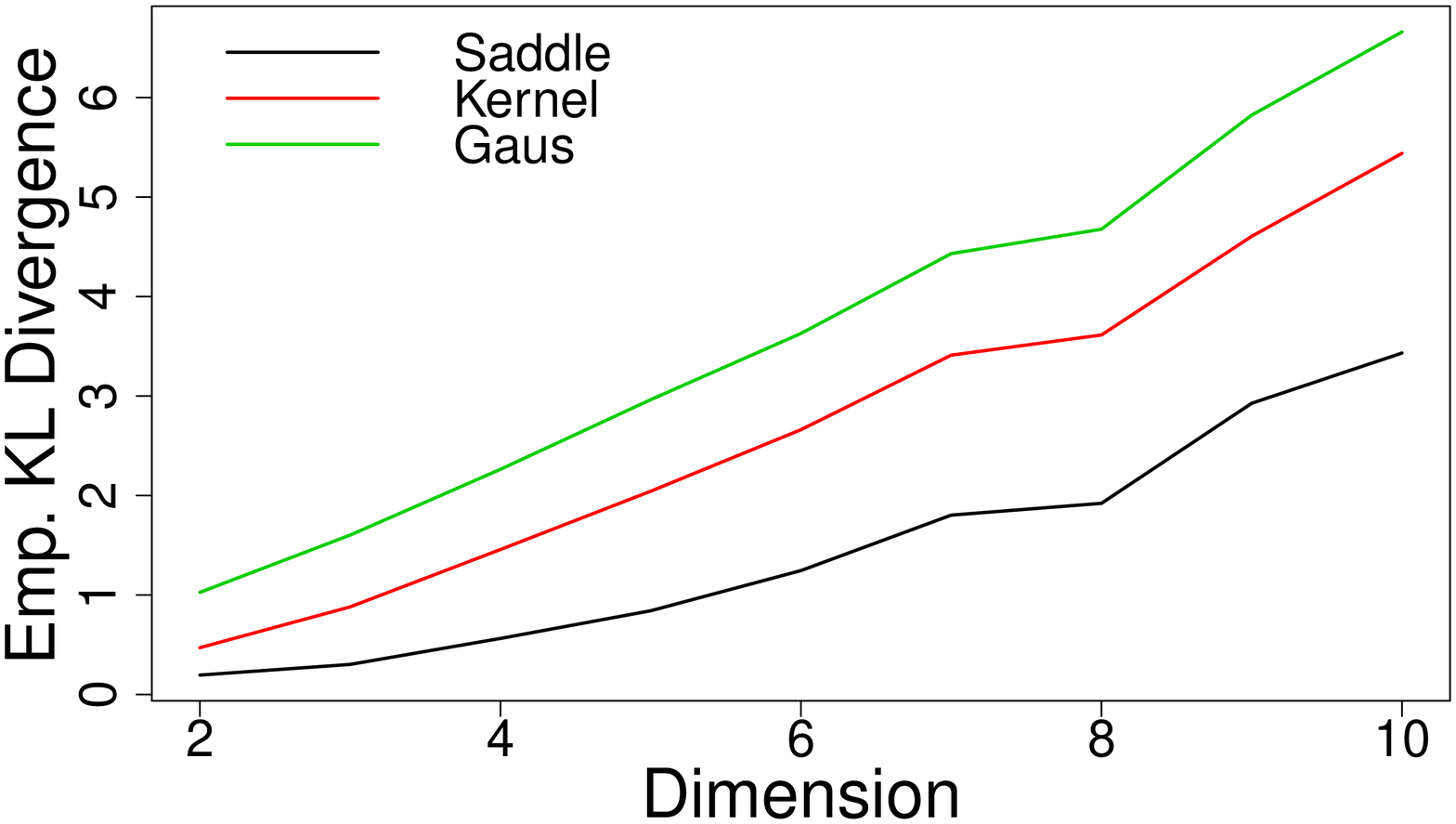}
\caption{Empirical Kullback-Leibler divergence between the three density estimators and the true density, as the number of dimensions increases.}
\label{fig:kern_dens}
\end{figure} 
 
As in the uncorrelated scenario (see the main text) we now estimate the shifts $\theta_1, \dots, \theta_d$, using the Gaussian and the new density estimator. We have considered a 10 and a 20-dimensional scenario. In both cases $\gamma$ has been selected by cross-validation. We have used $\beta = 0.2$ and $\theta_1 = \cdots = \theta_d = 0$. Given that the shape of the densities does not change with any of the $\theta$s we set $l=0$, and we have not computed the normalizing constant. We have used $m = 10^4$ and $m = 5 \times 10^4$ simulated vectors, respectively. By using EES, the Mean Squared Error was reduced from 21.9 to 4.8 in the 10-dimensional setting, and from 22.7 to 3.2 in the $20$-dimensional setting. P-values from t-test for differences in log-absolute errors were lower than $10^{-9}$ in both runs.

\end{document}